\newif\iffull
\newif\ifabstract
\newif\ifnotes
\newcommand{\sam}[1]{{\ifnotes \textcolor{PineGreen}{\scriptsize Sam: {#1}} \fi}}
\newcommand{\shikha}[1]{\ifnotes {\noindent \scriptsize  \textcolor{blue} {Shikha: {#1}}} \fi{}}
\newcommand{\ben}[1]{{\ifnotes \scriptsize \textcolor{orange}{Ben: {#1}} \fi}}
\newcommand{\aidin}[1]{{\ifnotes \scriptsize \textcolor{red}{Aidin: {#1}} \fi}}
\newcommand{\sam}[1]{}
\newcommand{\shikha}[1]{}
\newcommand{\ben}[1]{}
\newcommand{\aidin}[1]{}
\newcommand{\defn}{\emph}
\renewcommand{\epsilon}{\varepsilon}
\renewcommand{\tilde}{\widetilde}
\newcommand{\learnedLL}{learnedLLA\xspace}
\newcommand{\LearnedLL}{learnedLLA\xspace}
\newcommand{\LLAs}{LLAs\xspace}
\newcommand{\LLA}{LLA\xspace}
\newcommand{\PMAs}{LLAs\xspace}
\newcommand{\PMA}{LLA\xspace}
\newcommand{\PMAA}{LLA algorithm\xspace}
\newcommand{\pma}{list labeling array\xspace}
\newcommand{\pmas}{list labeling arrays\xspace}
\newcommand{\calP}{{\mathcal P}}
\newcommand{\calD}{{\mathcal D}}
\renewcommand{\defn}[1]{\emph{\textbf{#1}}}
\newtheorem{theorem}{Theorem}
\newtheorem{lemma}[theorem]{Lemma}
\newtheorem{definition}[theorem]{Definition}
\newtheorem{corollary}[theorem]{Corollary}
\definecolor{dkgreen}{rgb}{0,0.6,0}
\definecolor{gray}{rgb}{0.5,0.5,0.5}
\definecolor{mauve}{rgb}{0.58,0,0.82}
\tiny\color{gray},
\let\originaleta\eta
\renewcommand{\eta}{\bm{\originaleta}}
\title{Online List Labeling with Predictions}
\author{%
 Samuel McCauley \\
 Department of Computer Science\\
 Williams College\\
 Williamstown, MA 01267 \\
 \texttt{sam@cs.williams.edu} \\
  % examples of more authors
   \And
   Benjamin Moseley \\
   Tepper School of Business \\
   Carnegie Mellon University\\
   Pittsburgh, PA 15213 \\
   \texttt{moseleyb@andrew.cmu.edu} \\
   \And
   Aidin Niaparast \\
   Tepper School of Business \\
   Carnegie Mellon University\\
   Pittsburgh, PA 15213 \\
   \texttt{aniapara@andrew.cmu.edu} \\
   \And
   Shikha Singh \\
 Department of Computer Science\\
 Williams College\\
 Williamstown, MA 01267 \\
   \texttt{shikha@cs.williams.edu} 
}
\begin{document}
 %\sloppy
\maketitle

\begin{abstract}
A growing line of work shows how learned predictions can be used to break through worst-cast barriers to improve the running time of an algorithm. However, incorporating predictions into data structures with strong theoretical guarantees remains underdeveloped.  This paper takes a step in this direction by showing that predictions can be leveraged in the fundamental \emph{online list labeling problem}. In the problem, $n$ items arrive over time and must be stored in \emph{sorted order} in an array of size $\Theta(n)$.  The array slot of an element is its \emph{label} and the goal is to maintain sorted order while minimizing the total number of elements moved (i.e., relabeled). We design a new list labeling data structure and bound its performance in two models.  In the worst-case learning-augmented model, we give guarantees in terms of the error in the predictions.  Our data structure provides strong theoretical guarantees---it is optimal for \emph{any prediction error} and guarantees the best-known worst-case bound even when the predictions are entirely erroneous. We also consider a stochastic error model and bound the performance in terms of the expectation and variance of the error. Finally, the theoretical results are demonstrated empirically.  In particular, we show that our data structure performs well on numerous real datasets,  including temporal data sets where predictions are constructed from elements that arrived in the past (as is typically done in a practical use case).
\end{abstract}

\section{Introduction}\label{sec:intro}

A burgeoning recent line of work has focused on coupling machine learning with discrete optimization algorithms.  The area is known as {\em algorithms with predictions}, or alternatively,  {\em learning augmented algorithms}~\cite{mitzenmacher2022algorithms,roughgarden2021beyond}.   
This area has developed a  framework for beyond-worst-case analysis that is generally applicable.  In this framework  an algorithm is given a prediction that can be erroneous. The algorithm can use the prediction to tailor itself to the given problem instance and the performance is bounded in terms of the error in the prediction.  

Much prior work has focused on using this framework in the online setting where learned predictions are used to cope with uncertainty \cite{LattanziLMV20}.  This framework has been further used for warm-starting offline algorithms to
improve the beyond worst-case running times of 
combinatorial optimization problems. This includes results on weighted bipartite matching \cite{DinitzILMV21}, maximum flows \cite{DaviesMoVa23},  shortest-paths \cite{ChenSVZ22} and convex optimization \cite{SakaueO22}.

 A key question  is to develop a theoretical understanding of how to improve the performance of data structures using learning.  
 Kraska et al.~\cite{KraskaBCDP18}  jump-started this area by using learning to improve the performance of indexing problems.   
 Follow up work on 
 learned data structures (e.g.~\cite{vaidya2020partitioned,lin2022learning})
 have demonstrated their advantage over traditional worst-case variants both empirically and through improved bounds when the learned input follows a particular distribution; see Section~\ref{sec:related}.

 The theoretical foundation of using learned predictions in data structures remains underdeveloped. In particular,   there are no known analysis frameworks showing how the performance of learned data structures degrade as a function of the error in prediction.  In this paper,  we take a step in this direction by showing how to leverage the learning-augmented algorithms framework for the fundamental \defn{online list labeling problem}. We
 bound its performance in terms of the error in the prediction and show that this performance is optimal for any error.

In the online list labeling problem\footnote{We use `online list labeling' and `list labeling' interchangeably;  the problem is trivial in the offline setting.}, the goal is to efficiently store a list of \emph{dynamic} elements in \emph{sorted order}, which is a common primitive in many database systems.   More formally, a set of $n$ elements arrive over time and must be stored in sorted order in an array of size $cn$, where $c > 1$ is a constant.  The array slot of an element is referred to as its \emph{label}. When a new element arrives, it must be stored in the array between its predecessor and successor which may necessitate shifting (relabeling) other elements to make room.  The goal of a list labeling data structure is to  minimize the number of elements moved; we refer to the number of element movements as the \defn{cost}.  The challenge is that the insertion order is online and adversarial, and the space is limited. A greedy approach, for example, might end up moving $\Omega(n)$ elements each insert.

List labeling algorithms have been studied extensively for over forty years,  sometimes under different names such as sparse tables~\cite{ItaiKoRo81}, packed memory arrays~\cite{BenderDeFa00,BenderHu07}, sequential file maintenance~\cite{Willard82Maintaining,willard1981inserting,Willard86Good,Willard92A}, and order maintenance~\cite{Dietz82}. We call any data structure that solves the list labeling problem a \defn{list labeling array (LLA)}.   Itai et al. \cite{ItaiKoRo81} gave the first LLA which moves $O(\log^2 n)$ elements on average over all inserts.  This has since been extended to guarantee  $O(\log^2 n)$ elements are moved on any insert  \cite{Willard82Maintaining}.  These results are  tight as there is a $\Omega(\log^2 n)$ lower bound on the amortized number of movements needed for any deterministic algorithm~\cite{BulanekKoSa12}. 

  The \LLA is a basic building block in many applications, such as
 cache-oblivious data structures~\cite{BenderFiGi05},
 database indices~\cite{raman1999locality} and joins~\cite{khayyat2017fast},
 dynamic graph storage~\cite{wheatman2021parallel,de2021teseo, pandey2021terrace},
 and neighbor search in particle simulations~\cite{durand2012packed}. Recently, \PMAs have also been used in the design of dynamic learned indices~\cite{ding2020alex}, which we describe in more detail in Section~\ref{sec:related}.

Due to the their wide use, several prior works have attempted to improve LLAs using predictions heuristically.  Bender et al.~\cite{BenderHu07} introduced the \defn{adaptive packed memory array  (APMA)} which tries to adapt to non-worst case input sequences. The APMA uses a heuristic predictor to predict future inserts.  They show guarantees only for specific input instances, and leave open the question of improving performance on more general inputs. Follow up data structures such as rewired PMA~\cite{de2019packed} and gapped arrays~\cite{ding2020alex} also touch on this challenge---how to adapt the performance cost of a \LLA algorithm to non-worst case inputs, while still maintaining $O(\log^2 n)$ worst case.

A  question that looms is whether one can extract improved performance guarantees using learned predictions on general inputs.  An ideal algorithm can leverage accurate predictions to  give much stronger guarantees than the worst case. Further, the data structure should maintain an $O(\log^2 n)$ guarantee  in the worst case even with completely erroneous predictions.  Between, it would ideal to have a natural trade-off between the quality of predictions and the performance guarantees. 

\subsection{Our Contributions}\label{sec:results}

In this paper, we design a new data structure for list labeling which  incorporates learned predictions and bound its performance by the error in the predictions. Our theoretical results show that high-quality predictions allow us to break through worst-case lower bounds and achieve asymptotically improved performance. As the error in the prediction grows, the data structure's performance degrades proportionally, which is  %algorithm's performance is 
never worse than the state-of-the-art LLAs.  Thus, the algorithm gives improvements with predictions at no cost. This is established in both worst-case and stochastic models.  Our experiments show that these theoretical improvements are reflected in practice.
%are predictive of practical empirically. 

\paragraph{Prediction model.} We define the model to state our results formally. Each element $x$ has a \defn{rank} $r_x$ that is equal to the number of elements smaller than it after all elements arrive.   When an element arrives, a \defn{predicted rank} $\tilde{r}_x$ is given.
The algorithm can use the predicted ranks when storing elements and the challenge is that the predictions may be incorrect.  Define  $\eta_x = |\tilde{r}_x - r_x|$ as $x$'s \defn{prediction error}.  Let $\eta = \max_x \eta_x$ be the \defn{maximum error} in the predictions.  

\paragraph{Upper bounds.}  In Section~\ref{sec:learnedpma}, we present a new data structure, called a \defn{\LearnedLL}, which uses any list labeling array as a black box.  We prove that it has the following theoretical guarantees.
 \begin{compactitem}
\item  Under arbitrary predictions, the {\LearnedLL} guarantees an amortized cost of $O(\log^2 \eta)$ per insertion when using a classic LLA~\cite{ItaiKoRo81} as a black box. This implies that the algorithm has the best possible $O(1)$ cost when the predictions are perfect (or near perfect). Even when $\eta=n$, the largest possible prediction error, the cost is $O(\log^2 n)$, matching the lower bound on deterministic algorithms.  Between, there is graceful degradation in performance that is logarithmic in terms of the error.   In fact, these results are more general.  Given a list labeling algorithm (deterministic or randomized) with an amortized cost of $C(n)$, for 
a reasonable function $C$\footnote{For admissible cost functions $C(n)$ as defined in Definition~\ref{def:admissible}.}%
, our data structure guarantees performance $C(\eta)$. This implies the new data structure can improve the performance of any known LLA.  See Section~\ref{sec:arbitrary_predictions}.
\item We also analyze the \LearnedLL when the error $\eta_x$ for each element $x$ is sampled independently from an \emph{unknown} probability distribution $D$ with average error $\mu$ and variance $s^2$.  Under this setting, we show that the amortized cost is 
$O( \log^2 (|\mu|+s^2))$
in expectation.
Interestingly, the arrival order of elements can be adversarial and even depend on the sampled errors.  This implies that for any distribution with constant mean and variance, we give the best possible $O(1)$ amortized cost. See Section~\ref{sec:stochastic_predictions}.
\end{compactitem}

\paragraph{Lower bound.} 
A natural next question is whether the \learnedLL is the best possible or if there is an LLA that uses the predictions in a way that achieves a stronger amortized cost. In Section ~\ref{sec:lowerbound}, we show our algorithm is the best possible: for any $\eta$, \emph{any} deterministic algorithm must have amortized cost at least $\Omega(\log^2 \eta)$, matching our upper bound.  Thus, our algorithm utilizes predictions optimally.

\paragraph{Empirical results.}  Finally, we show that the theory is predictive of empirical performance in Section~\ref{sec:experiments}.   We demonstrate on real data sets that: (1) the \learnedLL outperforms  state-of-the-art baselines on numerous instances, (2) when current \PMAs perform well, the \learnedLL matches or even improves upon their performing, indicating minimal overhead from using predictions, (3) the \learnedLL is robust to 
 prediction errors, (4) these results hold on real-time-series data sets where the past is used to make predictions for the future (which is typically the use case in practice), and (5) we demonstrate that a small amount of data is needed to construct useful  predictions.
 
\subsection{Related Work}\label{sec:related}

See \cite{roughgarden2021beyond} for an overview of learning-augmented algorithms in general.  
\iffull
Below we describe related work on learned data structures and list labeling data structures.
\fi

\paragraph{Learned replacements of data structures.} 
The seminal work on learning-augmented algorithms by Kraska et al.~\cite{KraskaBCDP18}, and several follow up papers~\cite{ding2020alex,wu2021updatable,marcus2020cdfshop,kipf2019sosd,kipf2020radixspline,ferragina2020learned,galakatos2019fiting} focus on designing and analyzing a learned index, which  replaces a traditional data structure like a $B$-tree and outperforms it empirically on practical datasets.  
Interestingly, the use of learned indices directly motivates the learned online list labeling problem.  To apply learned indices on dynamic workloads, it is necessary to efficiently maintain the input in sorted order in an array.  Prior work~\cite{ding2020alex} attempted to address this through a greedy list labeling structure, a \emph{gapped array}. A gapped array, however, can easily incur $\Omega(n)$ element movements per insert even on non-worst case inputs. This bottleneck does not manifest itself in the theoretical or empirical performance of learned indices so far as the input is assumed to be randomly shuffled. \iffull%
\footnote{In their analysis showing the effectiveness of learned indices, Ferragina et al.~\cite{ferragina2020learned} further assume that gaps between input elements (taken in sorted order) follow a specific distribution.}  
\fi
Note that random order inserts are the {\em best case} for list labeling and incur $O(1)$ amortized cost in expectation~\cite{bender2006insertion}.  
In contrast, our guarantees hold against inputs with adversarial order that can even depend on the prediction errors.

Besides learned indices replacing $B$-trees, other learned replacements of data structures include hash tables~\cite{sabek2022can,ferragina2023learned} and rank-and-select structures~\cite{boffa2022learned,ferragina2021performance,ferragina2021repetition}.

\paragraph{Learned adaptations of data structures.}
In addition to approaches that use a trained neural net to replace a search tree or hash table, several learned variants of data structures which directly adapt to predictions have also been designed.  These include learned treaps~\cite{lin2022learning, chen2022power}, filters~\cite{vaidya2020partitioned,mitzenmacher2018model,bercea2022daisy,wheatman2018packed} and count-min sketch~\cite{hsu2019learning,du2021putting}.  The performance bounds of most of these data structures assume perfect predictions; when robustness to noise in predictions is analyzed (see e.g.~\cite{lin2022learning}), the resulting bounds revert to the worst case rather than degrading gracefully with error.  The \LearnedLL is unique in the landscape of learned data structure as it guarantees (a) optimal bounds for any error, and (b) best worst-case performance when the predictions are entirely erroneous.

\paragraph{Online list labeling data structures.}  The online list labeling problem is described in two different but equivalent ways:  storing a dynamic list of $n$ items in a sorted array of size $m$; or assigning labels from $\{1, \ldots, m\}$ to these items where for each item $x$, the label $\ell(x)$ is such that $x < y \implies \ell(x) < \ell(y)$. We focus on the {\em linear} regime of the problem where $m = cn$, and $c>1$ is a constant. 

If an \PMA maintains that any two elements have $\Theta(1)$ empty slots between them, it is referred to as a \defn{packed-memory array} (PMAs)~\cite{BenderDeFa00}. PMAs are used as a subroutine in many algorithms:  e.g. applied graph algorithms~\cite{de2019fast,de2021teseo,wheatman2021parallel, pandey2021terrace,wheatman2021streaming,wheatman2018packed} and indexing data structures~\cite{BenderDeFa00,ding2020alex,kopelowitz2012line}. 

A long standing open question about \LLAs---whether randomized LLA algorithms can perform better than deterministic was resolved in a recent breakthrough.  In particular, Bender et al.~\cite{BenderCoFa22} extended the \emph{history independent} LLA introduced in~\cite{BenderBeJo16} and showed that it guarantees $O(\log^{3/2} n)$ amortized expected cost. 
The \learnedLL achieves $O(\log^{3/2} \eta)$ amortized cost when using this \PMA as a black box inheriting its strong performance; see Corollary~\ref{cor:cost}.

 Bender and Hu~\cite{BenderHu07} gave the first beyond-worst-case \LLA, the \defn{adaptive PMA (APMA)}.  The APMA has $O(\log n)$ amortized cost on specialized sequences: \emph{sequential} (increasing/ decreasing), \emph{hammer} (repeatedly inserting the predecessor of an item) and \emph{random} inserts.  APMA also guarantees $O(\log^2 n)$ amortized cost in the worst case. 
 Moreover, the APMA uses heuristics to attempt to improve performance on inputs outside of these specialized sequences. %
 \iffull
 (Similar heuristics were used in the \emph{Rewired Memory Array}~\cite{de2019packed}). %
 \fi
 These heuristics do not have theoretical guarantees, but nonetheless, the APMA often performs better than a classic one empirically~\cite{BenderHu07,de2019fast}.  While the \learnedLL bounds (based on prediction error) are incomparable with that of the APMA, our experiments show that the \learnedLL outperforms the APMA on numerous datasets.  In fact, if we combine our techniques by using an adaptive PMA as the black-box \PMA of the \LearnedLL, it performs better than both of them, reinforcing its power and versatility.

\section{Preliminaries}\label{sec:prelim}

 In this section we formally define the list labeling problem, our prediction model, and the classic list labeling data structures we use as a building block in Section~\ref{sec:learnedpma}.

\paragraph{Problem definition.} In the \defn{online list labeling problem}, the goal is to maintain a dynamic sorted list of $n$ elements in an array of $m = cn$ slots for some constant $c > 1$.  
We refer to $n$ as the \defn{capacity} of the array, $m$ as the \defn{size}, and $n/m$ as the \defn{density}.
The \defn{label} of an element is the array slot assigned to it.  As new elements arrive, existing elements may have to be moved to make room for the new element.
All elements come from an ordered universe of possible elements $U$. The list labeling structure maintains that the elements in the array must be stored in sorted order---for any $x_i, x_j$ stored in the array, if $x_i < x_j$ then the label of $x_i$ must be less than the label of $x_j$.

We refer to 
a data structure for this problem as a \defn{\pma (\PMA)}.
An \defn{element movement} is said to occur each time an element is assigned a new label.\footnote{This means that every element has at least one element movement when it is inserted.}
We use the term element \emph{movements} and \emph{relabels} interchangeably. 
The \defn{total cost} of an \PMA is the total number of element movements.
The \defn{amortized cost} of an \PMA is the total cost during $k$ insertions, divided by $k$. This cost function is standard both in the theory and practice of \LLAs (see e.g.~\cite{ItaiKoRo81,BenderCoDe02,BenderCoFa22,BenderHu07,de2019packed}).

\paragraph{Data structure model.} We assume that the set $S$ of $n$ elements are drawn one-by-one adversarially from $U$ and inserted into the \LLA.  We use existing \PMAs~(e.g.~\cite{BenderCoFa22,BenderBeJo16,BenderCoDe02,ItaiKoRo81}) as a black box in our data structure.  We consider any (possibly randomized) \LLA $A$ which supports the following operations:
\begin{compactitem}
    \item \textsc{Insert}$(x)$:  Inserts $x$ in $A$.  The slot storing $x$ must be after the slot storing its predecessor, and before the slot storing its successor.
    \item \textsc{Init}$(S')$:  Given an empty LLA $A$ and given a (possibly-empty) set $S' \subseteq U$, insert all elements from $S'$ in $A$.
\end{compactitem}

Note that \textsc{Init}$(S')$ can be performed using $O(|S'|)$ element movements---as all $|S'|$ elements are available offline, they can be placed in $A$ one by one.

For simplicity we assume that $n$ is a power of $2$; our results easily generalize to arbitrary $n$.  All logarithms in this paper are base 2.

In this work, we do not explicitly discuss deletes.  However, our data structure can be easily extended to support deletes as follows.  We build a \LearnedLL with capacity $2n$.
We split all operations into \defn{epochs} of length $n$.  During an epoch we ignore all deletes.  After the epoch completes, we rebuild the data structure, removing all elements that were deleted during the epoch.  This costs $O(n)$, so the amortized cost is increased by $O(1)$.

We parameterize the amortized cost of the \LearnedLL by the amortized cost function of \textsc{Insert} for a black-box \PMAA of density $1/2$.\footnote{If we initialize a \PMA $P_j$ with $m'$ slots, we never store more than $m'/2$ elements in $P_j$.  However, this choice is for simplicity, and our results immediately generalize to \PMAs using different parameter settings.}  For background on how a classic \PMA, the packed-memory array (PMA)\cite{ItaiKoRo81, BenderDeFa00}, works see 
\iffull
Appendix~\ref{sec:appendix}.
\else
the full version  of the paper.  
\fi
 Let $C(n')$ upper bound the amortized cost to insert into a \PMA with capacity $n'$ and size $2n'$.  
 %That is to say: for any $k$, performing $k$ insert operations after an \textsc{Init} operation results in $O(kC(n'))$ movements.  
 \begin{definition}\label{def:admissible}
We say that a cost function $C(\cdot)$ is \defn{admissible} if:
\iffull
\begin{itemize}[noitemsep,nolistsep]
    \item for all $i,j$ with $i < 2j$, $C(i) = O(C(j))$,
    \item $C(n')  = \Omega(\log n')$,   and 
    \item for all $j$, $\sum_{i=1}^{\infty} C(2^{j + i})/2^i = O(C(j))$.   
\end{itemize}
\else
    (1) for all $i,j$ with $i < 2j$, $C(i) = O(C(j))$,
    (2) $C(n')  = \Omega(\log n')$,  
    and (3) for all $j$, $\sum_{i=1}^{\infty} C(2^{j + i})/2^i = O(C(j))$.   
\fi
\end{definition}

Both $C(n) = \Theta(\log^2 n)$ (the optimal cost function for any deterministic \PMA) and $C(n)  = \Theta(\log^{3/2} n)$ (the best-known cost function for any randomized \PMA~\cite{BenderCoFa22}) are admissible.  There is an unconditional lower bound that any \PMA must have amortized cost at least $\Omega(\log n')$~\cite{BulanekKaSa13}. 

\iffull
The prediction model is introduced in Section~\ref{sec:results}. Recall that
$\eta_x = |\tilde{r}_x - r_x|$ is the prediction error of element $x$, where $\tilde{r}_x$ is its \defn{predicted rank} and $r_x$ is its true (unknown) \defn{rank} in $S$.   
In Section~\ref{sec:arbitrary_predictions}, the predicted ranks may be assigned adversarially and we bound the cost in terms of the \defn{maximum error} $\eta = \max_i \eta_i$. 
 In Section~\ref{sec:stochastic_predictions}, we assume that the prediction errors are 
 drawn independently from an unknown distribution $\calD$ and bound the cost in terms of the mean and variance of $\calD$.
 \fi

\iffull
\section{Our Data Structure: the \LearnedLL}\label{sec:learnedpma}

In this section we design and analyze our data structure, the \LearnedLL which uses any generic \pma as a black box.  For space, many of the proofs in this section have been removed; they can be found in the full version, included in the supplementary materials.

We begin in Section~\ref{sec:datastructuredescription} by describing how the \LearnedLL works. In Section~\ref{sec:arbitrary_predictions} we give a worst-case analysis for the \LearnedLL.
If the black-box \PMA has admissible amortized cost $C(n)$ for $n$ insertions, our data structure achieves an amortized cost of $C(\eta)$ for maximum error $\eta$.  

In Section~\ref{sec:lowerbound}, we show that the \LearnedLL is {\em optimal} for deterministic solutions to list labeling.  Bul{\'a}nek et al. showed that that in the worst case any deterministic \PMAA must have worst-case cost $C(n') = \Omega(\log^2 n')$~\cite{BulanekKoSa12}. We use this lower bound to show that for any maximum error $\eta$, any deterministic \LLA with $m = O(n)$ slots must incur total cost $\Omega(n \log^2 \eta)$.

In Section~\ref{sec:stochastic_predictions} we give improved bounds for our data structure when the predictions are not adversarial but rather drawn from a distribution with bounded expectation and variance.  If the distribution has expectation $\mu$ and variance $s^2$, then using the best-known randomized list labeling data structure gives expected amortized cost $O(\log^{3/2}(\mu + s))$.

\paragraph{Learned List-Labeling Array.}  
Our learned list labeling data structure partitions its array into several subarrays, each of which is maintained as its own \pma.  At a high level, when an element $x_i$ is inserted along with its predicted rank $\tilde{r}_i$, we use $\tilde{r}_i$ to help determine which \PMA to insert into while maintaining sorted order. 
Intuitively, the performance comes from keeping these constituent \pmas small: 
if the predictions are good, it is easy to maintain small \LLAs while retaining sorted order.  But, if some element has large prediction error, its \pma grows large, making inserts into it more expensive.  

Below, we describe and analyze this data structure more formally.

\subsection{Data Structure Description}
\label{sec:datastructuredescription}

The \LearnedLL with capacity $n$ operates on an array of $m = 6n$ slots.  At all times, the \LearnedLL is partitioned into $\ell$ \defn{actual \PMAs} $P_1, P_2, \ldots P_\ell$ for some $\ell$.  In particular, we partition the $m$ array slots into $\ell$ contiguous subarrays; each subarray is handled by a black-box \PMA.  We maintain that for all $i < j$, all elements in $P_i$ are less than all elements in $P_j$.

We define a tree to help keep track of the actual \PMAs. Consider an implicit, static and complete binary tree $T$ over ranks $\{1,\ldots n\}$. Each node in $T$ has a set of \defn{assigned ranks} and \defn{assigned slots}.   Specifically, the $i$th node at height $h$ has $2^h$ assigned ranks $\{2^h (i-1) + 1, 2^h (i-1) + 2, \ldots, 2^h i\}$,
and $6\cdot 2^h$ assigned slots $\{2^h\cdot 6(i-1) + 1, 2^h \cdot 6(i-1) + 2, \ldots, 2^h\cdot 6i\}$.

We use each actual \PMA as a black box to handle how elements are moved within its assigned slots. 
 To obtain the \LearnedLL label for an element $x_i$ in actual \PMA $P_j$, we sum the black-box label of $x_i$ in $P_j$ and the value of the smallest slot assigned to $P_j$ minus $1$.

Every actual \PMA has a set of assigned ranks and assigned slots; these ranks and slots must be from some node in $T$.  We say that the \PMA \defn{corresponds} to this node in $T$.
Each root-to-leaf path in $T$ passes through exactly one node that corresponds to an actual \PMA.
This means that the assigned ranks of the actual \PMAs partition $\{1, \ldots, n\}$ into $\ell$ contiguous subsets, and the assigned slots of the actual \PMAs partition $\{1, \ldots, m\}$ into $\ell$ contiguous subsets.

If a node $v$ in $T$ is not assigned to an actual \PMA, it may be useful to consider the assigned ranks and slots that would be used if a \PMA assigned to $v$ were to exist.  We call such a \PMA (that is not in $P_1,\ldots, P_\ell$) a \defn{potential \PMA}.
 The \defn{sibling}, \defn{parent} and \defn{descendants} of a \PMA $P_j$ refer to the LLAs corresponding to the sibling, parent and descendant nodes respectively of $P_j's$ node in the tree.

For any potential \PMA $P_j$, 
let ${|P_j|}$ denote the number of assigned ranks of $P_j$.
Initially $\ell=n$ and $P_i$ is assigned to the $i$th leaf of $T$, where $1 \leq i \leq n$---that is to say, the actual \PMAs begin as the leaves of $T$, each with one assigned rank.  The parameter $\ell$ changes over time as elements are inserted.  

Each element is inserted into exactly one actual \PMA. 
If an \PMA has density more than $1/2$, 
we \defn{merge} it with the (potential or actual) \PMA of its sibling node in $T$. 
 
\paragraph{Insertion Algorithm.}

To insert an item $x$ into the \LearnedLL, let $i_p$ and $i_s$ be the index of the \PMA containing the predecessor and successor of $x$ respectively.  If the successor of $x$ does not exist, let $i_p=\ell$ (the last actual \PMA); if the predecessor does not exist let $i_p = 1$.
Finally, let $i_x$ be the \PMA whose assigned ranks contain $\tilde{r_x}$.  
Call $\textsc{Insert}(x)$ on \PMA $P_i$ where:

\begin{equation}
i =
\left\{
	\begin{array}{ll}
		i_p  & \text{if } i_p > i_x \\
		i_s & \text{if } i_s < i_x \\
            i_x & \text{otherwise}
	\end{array}
\right.
\end{equation}

The black-box LLA $P_i$ updates its labels internally, and these updates are mirrored in the \LearnedLL.

If the insert causes the density of $P_i$ to go above $1/2$,  we \defn{merge} it with its sibling \PMA. 
Specifically, let $P_p$ be the parent of $P_i$; it must currently be a potential \PMA.  We take all actual \PMAs that are descendants of $P_p$, and merge them into a single actual \PMA.
Let $S_p$ be the set of elements currently stored in the slots assigned to $P_p$ in the \LearnedLL; thus, $S_p$ consists exactly of the contents of all actual \PMAs that are descendants of $P_p$.  We call $\textsc{Init}(S_p)$ on $P_p$, after which $P_p$ is an actual \PMA; all of its descendants are no longer actual \PMAs.
Note that after a merge, the assigned ranks and assigned slots of the actual \PMAs still partition $\{1, \ldots, n\}$ and $\{1, \ldots, m\}$ respectively.

After we insert into $P_i$, and merge if necessary, the \LearnedLL insert is complete.

Next, we analyze the performance of the \LearnedLL.  Let $C(\cdot)$ be amortized cost of the black-box \PMA  used in the \LearnedLL---our results will be in terms of $C(\cdot)$.\footnote{We assume there is one type of \PMA; otherwise, let $C(\cdot)$ upper bound the cost of all \PMAs being used.}  We assume that $C(\cdot)$ is admissible (Definition~\ref{def:admissible}).

\subsection{Arbitrary Predictions}\label{sec:arbitrary_predictions}
In this section, we assume that the input sequence and associated predictions can be arbitrary.
The adversary chooses elements one by one from a universe $U$. For each element, the adversary assigns a predicted rank (which can be based on past insertions, past predictions, and even the 
\LearnedLL algorithm being used) and inserts the element.
We show that the \LearnedLL achieves amortized cost $O(C(\eta))$, where $\eta$ is the maximum error.  

Our proof begins with several structural results, from which the analysis follows almost immediately.  

First, we show that to determine the overall asymptotic cost it is sufficient to consider the set of actual \PMAs after all inserts are completed.  That is,  the relabels that occur during merges and inserts to smaller \PMAs are lower-order terms.

\begin{lemma}
    \label{lem:finalpmas}
    For any sequence of insertions, 
    if $\calP_F$ is the set of actual \PMAs after all operations are completed, then the total number of element movements incurred by the 
    \LearnedLL is
    \[
    O\left( \sum_{P\in \calP_F} |P| \cdot C(|P|) \right).
    \]
\end{lemma}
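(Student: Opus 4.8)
The plan is to charge all element movements to the final set of actual \PMAs $\calP_F$ using a potential/accounting argument structured around the tree $T$. The key observation is that every \PMA that ever exists during the execution corresponds to some node of $T$, and the actual \PMAs at any moment correspond to an antichain of $T$ that forms a partition of the leaves. Over the course of the algorithm, a \PMA at a node $v$ only ever gets created by a merge of its descendants, and once $v$'s subtree has been merged into a single actual \PMA it can only be further merged upward into $v$'s parent. So the ``life history'' of the structure is captured by a sequence of merges moving up the tree, and each actual \PMA in $\calP_F$ is the endpoint of one such chain of merges.

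First I would classify the element movements into three types: (i) movements caused by a direct \textsc{Insert}$(x)$ into some actual \PMA $P_i$ that does \emph{not} trigger a merge; (ii) movements caused by the \textsc{Init}$(S_p)$ call during a merge; and (iii) the single movement accounting for inserting $x$ itself. For type (iii) there are $n$ of them total, contributing $O(n) = O(\sum_{P \in \calP_F}|P|)$ since the assigned ranks of $\calP_F$ partition $\{1,\dots,n\}$, and by admissibility $C(|P|) = \Omega(1)$, so this is absorbed. For type (i), I would fix an actual \PMA $P$ appearing in $\calP_F$, with $|P| = 2^h$ assigned ranks, and argue that every non-merge insert ever performed into a \PMA whose node lies in the subtree rooted at $P$'s node can be charged to $P$. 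The number of such inserts is at most the number of elements that end up assigned to $P$, which is at most $|P|/2$ (density at most $1/2$ before the final merge into $P$'s node, plus we never exceed $1/2$ while $P$ is actual). Each such insert costs at most $C(|P'|)$ for the \PMA $P'$ it went into, and $|P'| \le |P|$ since $P'$ was a descendant; by the first admissibility condition ($i < 2j \Rightarrow C(i) = O(C(j))$) this is $O(C(|P|))$. Summing, the type-(i) cost charged to $P$ is $O(|P| \cdot C(|P|))$, and summing over $\calP_F$ gives the claimed bound for this type.

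The main obstacle is type (ii): bounding the total cost of \textsc{Init} calls, because a single element can participate in many merges as the structure coarsens upward. Here I would charge a merge that creates an actual \PMA at node $v$ — costing $O(|S_v|) = O(|v|)$ where $|v|$ is the number of assigned ranks of $v$ — to the ancestor-in-$\calP_F$, i.e.\ the node $P \in \calP_F$ whose subtree contains $v$ (such a $P$ is unique, since $\calP_F$ partitions the leaves, \emph{unless} $v$ itself is a strict ancestor of some $\calP_F$ node, but a merge only ever produces a \PMA that is later re-merged or is final, so each merge node is a descendant of exactly one $\calP_F$ node or equals it). For a fixed $P \in \calP_F$ at height $h$, the merges charged to it occur at strictly increasing heights $h' < h$ along chains inside $P$'s subtree; at height $h'$ there are at most $2^{h-h'}$ nodes, each contributing $O(2^{h'})$, but crucially a merge at height $h'$ only happens if the corresponding \PMA reached density $1/2$, i.e.\ received $\ge 2^{h'-1}$ inserts — however those inserts are type-(i) events already counted, so instead I would bound the merges geometrically: the total \textsc{Init} cost inside $P$'s subtree is $O\big(\sum_{h'=0}^{h} 2^{h-h'} \cdot 2^{h'}\big)$... which is $O(h \cdot 2^h)$, not obviously $O(2^h C(2^h))$. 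To fix this I instead charge a height-$h'$ merge to the $2^{h'-1}$ inserts that caused it, each insert paying $O(1)$ amortized across all the merges it triggers going up the chain; since an element triggers at most one merge per level and the \textsc{Init} cost at level $h'$ spread over its $\Theta(2^{h'})$ triggering elements is $O(1)$ each, the total over all levels an element survives is $O(\log n)$, and by admissibility condition (2), $C(|P|) = \Omega(\log |P|)$, so this is $O(|P| \cdot C(|P|))$ after summing over the $\le |P|$ elements ending in $P$. Assembling (i), (ii), (iii) yields the lemma; the delicate point to get right is making the per-element merge-charging argument airtight so that no element is charged more than $O(\log n)$ total for \textsc{Init} calls and that this telescopes correctly against $\Omega(\log|P|)$.
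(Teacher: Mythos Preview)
Your approach matches the paper's: charge both insert and merge (\textsc{Init}) cost to the unique $P\in\calP_F$ whose subtree contains the relevant node, bound inserts by $O(|P|\,C(|P|))$ and merges by $O(|P|\log|P|)$, and absorb the latter via admissibility.

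You actually had the merge bound right the first time and abandoned it unnecessarily. Your sum $\sum_{h'=0}^{h} 2^{h-h'}\cdot 2^{h'}=O(h\cdot 2^{h})=O(|P|\log|P|)$ is exactly what the paper computes, and it \emph{is} $O(|P|\,C(|P|))$ immediately by admissibility condition~(2), $C(n')=\Omega(\log n')$; no further work is needed. The per-element detour is unnecessary, and as written it has a slip: an element that ends in $P$ is on the overflowing side of at most $\log|P|$ merges (the height of its containing \PMA strictly increases each time), not $\log n$, and you need the $\log|P|$ bound---not $\log n$---to invoke $C(|P|)=\Omega(\log|P|)$. One minor constant: a final $P$ can hold up to $3|P|$ elements (density $1/2$ of $6|P|$ slots), not $|P|/2$, though this does not affect the asymptotics.
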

\begin{proof}
For a \PMA $P\in \calP_F$, consider all potential \PMAs that are descendants of $P$ in $T$.  This partitions the merges: every $P'$ that is an actual \PMA at any time during $\sigma$ must be a descendant of exactly one $P\in\calP_F$.

First, we analyze the cost of all merges.  
The total capacity of all \PMAs that are a descendant of $P$ is $|P|\log |P|$.  Since a merge operation has linear cost, the total number of element movements during all merges is $O(|P|\log |P|)$.

Now, we must bound the cost of all inserts.  
Consider all inserts into all \PMAs $P_1, P_2, \ldots, P_d$ that are a descendant of $P$.  Let $k_i$ be the number of inserts into some such \PMA $P_i$; the total cost of these inserts is $O(k_i C(|P_i|)) \leq O(k_i C(|P|))$.  All inserts into descendants of $P$ are ultimately stored in $P$; thus, $\sum_{i=1}^d k_i \leq 3|P|$.  Then the total cost of all inserts into \PMAs that are a descendant of $P$ is $\sum_{i=1}^d O(k_i C(|P|)) = O(|P| C(|P|))$.

Summing between the merge and insert cost, and using $C(|P|) + \log |P| = O(C(|P|))$ because $C(\cdot)$ is admissible, we obtain a total number of element movements of 
\[
O\left( \sum_{P\in \calP_F} |P| \cdot C(|P|) + |P|\log |P|\right) \leq 
O\left( \sum_{P\in \calP_F} |P| \cdot C(|P|) \right).
\]
 \end{proof}

 \begin{lemma}
     \label{lem:higherrorelement}
     At any time, if $P$ is an actual \PMA, then there exists an element $x_j$ stored in $P$ with $\eta_j \geq |P|/2$.
 \end{lemma}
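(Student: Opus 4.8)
The plan is to argue by contradiction: suppose $P$ is an actual \PMA with $|P| \geq 2$ (the case $|P|=1$ is vacuous since every element has $\eta_x \geq 0$... actually we need $\eta_j \geq 1/2$, i.e. $\eta_j \geq 1$ since errors are integers, so even $|P|=1$ needs care — but a leaf \PMA corresponds to a single rank and holds the element predicted there, which may have error $0$; so the statement must implicitly concern \PMAs that have been merged at least once, i.e. $|P| \geq 2$). Suppose for contradiction that every element $x_j$ stored in $P$ has $\eta_j < |P|/2$. The key structural fact to exploit is \emph{why} $P$ became an actual \PMA in the first place: $P$ corresponds to some internal node $v$ of $T$, and $P$ was created by a merge, which happens only when one of $P$'s children (as a \PMA) reached density above $1/2$. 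Crucially, at the moment of that merge, $P$'s assigned slots held enough elements that, after \textsc{Init}, $P$ has density exceeding... no — let me reconsider: the merge is triggered when a child reaches density $> 1/2$, so the child held more than $|P_{\text{child}}|/2 = |P|/4$ elements in its $|P|$ assigned ranks' worth of... I need to be careful with the density/capacity bookkeeping.

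The cleaner route: $P$ has $|P|$ assigned ranks, namely a contiguous block $\{a+1, \ldots, a+|P|\}$ of true ranks. I would show that $P$ stores at least, say, $|P|/4$ elements (since a merge was triggered by a child exceeding density $1/2$, and that child's elements are now in $P$). Now suppose every element in $P$ had error $< |P|/2$. An element $x_j$ with true rank $r_j$ and $\eta_j = |\tilde r_j - r_j| < |P|/2$ that is routed into $P$ must — by the insertion rule — have either $\tilde r_j$ inside $P$'s assigned rank block, or have been pulled in because its predecessor/successor \PMA index straddled $i_x$. I would chase the insertion rule to conclude that any element ending up in $P$ has true rank within $|P|/2$ of $P$'s assigned rank block (this is where the predecessor/successor clamping needs to be unwound: if $x$ goes to $P_{i_p}$ because $i_p > i_x$, then $x$'s predecessor lives in $P$, so $x$'s true rank is at least the smallest true rank in $P$ minus... hmm, predecessor has smaller true rank, so $x$'s rank exceeds it; combined with $\tilde r_x$ small we bound $r_x$ from below, and similarly above). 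So all elements of $P$ have true rank in an interval of length at most $|P| + 2\cdot(|P|/2) = 2|P|$. But these true ranks are \emph{global} ranks among all $n$ elements and are distinct, so at most $2|P|$ distinct elements can ever be routed to $P$ — which does not yet give a contradiction. The real contradiction must come from a \emph{counting} argument tied to the merge threshold combined with the fact that $P$'s assigned \emph{rank block} has size exactly $|P|$: elements whose true rank lies outside $P$'s block but inside the $2|P|$-window are the only "extra" ones, and there are at most $|P|$ of those ($|P|/2$ on each side); so $P$ can hold at most $2|P|$ elements total — still consistent.

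Let me pin down the actual mechanism. $P$ became actual because a descendant child \PMA $P'$ (with $|P'| = |P|/2$) exceeded density $1/2$, i.e. $P'$ held more than $|P'|$ elements... no: density $>1/2$ in $P'$'s slots ($6|P'| \cdot \tfrac12$... actually the black-box runs at density $1/2$ meaning it stores at most $m'/2$ elements where $m'$ is its slot count; "density more than $1/2$" means it stores more than $3|P'|$... I'll reconcile constants in the writeup). The point is $P'$ holds $\Theta(|P'|) = \Theta(|P|)$ elements, all now in $P$, and $P'$'s assigned rank block has size only $|P'| = |P|/2$. If every element in $P'$ had error $< |P'|/2 = |P|/4$, then by the rank-interval argument each such element has true rank within $|P|/4$ of $P'$'s block, so within an interval of length $|P'| + 2(|P|/4) = |P|$. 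But that interval contains only $|P| = 2|P'|$ distinct global ranks, so $P'$ holds at most $2|P'|$ elements — and if the threshold requires strictly more than $2|P'|$ elements (which it does, once constants are fixed so that "density $>1/2$" forces more than $2|P'|$ stored elements), we get a contradiction. Hence some element in $P'$, and therefore in $P$, has error $\geq |P'|/2 = |P|/4$. The lemma as stated wants $\geq |P|/2$, so either the constants in the data structure are set generously enough to push this to $|P|/2$, or one applies the argument directly at $P$ rather than $P'$; the main obstacle is exactly this constant-chasing — carefully unwinding the insertion clamping rule and the density threshold to land the factor $1/2$ rather than some smaller constant. I would organize the final proof as: (1) fix notation for $P$'s assigned rank block; (2) prove the clamping lemma — any element routed to $P$ with error $<|P|/2$ has true rank within $|P|/2$ of the block; (3) count global ranks in the resulting window and compare to the number of elements the merge threshold guarantees are in $P$; (4) derive the contradiction.
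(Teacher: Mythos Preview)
Your overall strategy---go to the child \PMA $P'$ that triggered the merge, count its elements via the density threshold, and bound the rank window---matches the paper's approach. Two execution points keep you from landing the constant $|P|/2$.

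First, the threshold. When you drop to $P'$ you take as contradiction hypothesis ``every element in $P'$ has error $< |P'|/2 = |P|/4$'', which of course only yields an element of error $\geq |P|/4$. The lemma asks for error $\geq |P|/2 = |P'|$, so the correct hypothesis is ``every error is $< |P'|$''. With that threshold the numbers close: $P'$'s assigned-rank block has width $|P'|-1$; extending by $|P'|$ on each side gives a window containing fewer than $3|P'|$ integer ranks; but density $\geq 1/2$ on $6|P'|$ slots means $P'$ holds at least $3|P'|$ elements---contradiction. You also cannot run the argument at $P$ itself (your alternative suggestion): $P$ is only guaranteed to hold the $\geq 3|P'| = \tfrac32|P|$ elements contributed by its overfull child, and comparing that to $P$'s block of width $|P|-1$ again gives only $\geq |P|/4$.

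Second, the clamping. Your step~(2) as stated---``any element routed to $P$ with error $<|P|/2$ has true rank within $|P|/2$ of the block''---is true, but it does not follow from unwinding the predecessor/successor cases for a generic element; an element routed via the $i_p$ or $i_s$ branch has no direct bound on its own predicted rank (as you started to notice). The clean route, and the one the paper takes, looks only at the \emph{extremal} elements $x_s,x_\ell$ of $P'$. Since $x_s$ is smallest in $P'$, its predecessor lies outside $P'$, so $x_s$ was routed there via the $i_x$ or $i_s$ branch; in either case $\tilde r_s \geq r_1$, the least assigned rank of $P'$. Symmetrically $\tilde r_\ell \leq r_2$. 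Hence $\tilde r_\ell - \tilde r_s \leq |P'|-1$, while $r_\ell - r_s \geq 3|P'|-1$ since $P'$ holds at least $3|P'|$ elements with distinct ranks. Subtracting, $\eta_s+\eta_\ell \geq 2|P'| = |P|$, so one of $x_s,x_\ell$ (both of which lie in $P$) has error $\geq |P|/2$.
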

\begin{proof}
Since $P$ was formed by merging its two children $P_i$ and $P_k$ in $T$, the density of either $P_i$ or $P_k$ must have been at least $1/2$; without loss of generality assume it was  $P_i$.
       
	Let $r_1, r_1 + 1, \ldots, r_2$ be the sequence of $|P_i|$ ranks assigned to $P_i$.  Let $x_s$ and $x_\ell$ be the smallest and largest items in $P_i$ respectively, with predicted ranks $\tilde{r_s}$ and $\tilde{r_\ell}$.  We must have $\tilde{r_s} \geq r_1$---since the predecessor of $x_s$ is not in $P_i$, $x_s$ must be placed in the \PMA whose assigned ranks contain $\tilde{r_s}$, or in the \PMA containing its successor (if $\tilde{r_s}$ is larger than any rank assigned to the \PMA containing its successor).  
    Similarly, $\tilde{r_\ell} \leq r_2$.
 Therefore, $\tilde{r_\ell} - \tilde{r_s} \leq |P_i|- 1$. There are at least $3|P_i|$ items in $P_i$ as its density is at least $1/2$.  Thus, $r_\ell - r_s \geq 3|P_i| - 1$.  Thus, either $|\tilde{r_s} - r_s| \geq |P_i|$ or $|\tilde{r_\ell} - r_\ell| \geq |P_i|$.  
 Noting that $|P_i| = |P|/2$, 
 either $\eta_s \geq |P|/2$ or $\eta_\ell \geq |P|/2$.
\end{proof}

 \begin{theorem}
\label{lem:costtoinsert}
For any sequence of at most $n$ insertions $\sigma$ with maximum error $\eta$, 
the \LearnedLL using \PMAs with admissible cost function $C(\cdot)$
incurs $O(n C(\eta))$ total element movements.
\end{theorem}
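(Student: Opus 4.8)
The plan is to combine the two structural lemmas just proved. By Lemma~\ref{lem:finalpmas}, the total cost is $O\!\left(\sum_{P\in\calP_F} |P|\,C(|P|)\right)$, where $\calP_F$ is the collection of actual \PMAs at the end. So it suffices to bound this sum by $O(nC(\eta))$. The assigned ranks of the \PMAs in $\calP_F$ partition $\{1,\dots,n\}$, so $\sum_{P\in\calP_F}|P|=n$; if we can show every $P\in\calP_F$ satisfies $|P|\le 2\eta$, then admissibility (condition (1): $C(i)=O(C(j))$ whenever $i<2j$) gives $C(|P|)=O(C(\eta))$ for each such $P$, and hence $\sum_{P\in\calP_F}|P|\,C(|P|)=O\!\left(C(\eta)\sum_{P\in\calP_F}|P|\right)=O(nC(\eta))$, as desired.

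So the whole proof reduces to: every actual \PMA $P$ has $|P|\le 2\eta$. This is immediate from Lemma~\ref{lem:higherrorelement}: any actual \PMA $P$ contains some element $x_j$ with $\eta_j\ge |P|/2$, and since $\eta=\max_x\eta_x\ge\eta_j\ge|P|/2$, we get $|P|\le 2\eta$. (A \PMA that never merged is still a leaf of $T$ with $|P|=1\le 2\eta$ trivially, since $\eta\ge 0$; and if $\eta=0$ then no merges ever happen by Lemma~\ref{lem:higherrorelement}, so all \PMAs are leaves with $|P|=1$ and the bound $O(nC(1))=O(n)$ still holds — this edge case should be noted.)

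I would then just assemble the three pieces in order: first invoke Lemma~\ref{lem:finalpmas} to reduce to the final-\PMA sum; second, use Lemma~\ref{lem:higherrorelement} together with the definition $\eta=\max_x\eta_x$ to conclude $|P|\le 2\eta$ for every $P\in\calP_F$; third, apply admissibility to pull $C(|P|)$ up to $C(\eta)$ and use $\sum_{P\in\calP_F}|P|=n$ to finish. There is essentially no obstacle here — the real work was done in Lemmas~\ref{lem:finalpmas} and~\ref{lem:higherrorelement}, and this theorem is the short synthesis. The only point requiring a moment of care is making sure the case $\eta=0$ (or $\eta$ a small constant) is handled cleanly, which follows because admissibility forces $C(\eta)=\Omega(\log\eta)$ but also we only ever need the upper-bound direction, and $C(1)=O(1)$-type constants are absorbed by the $O(\cdot)$; alternatively one observes that when $\eta$ is $O(1)$ no \PMA ever exceeds constant size, so the bound $O(nC(\eta))=O(n)$ is trivially met.
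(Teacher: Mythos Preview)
Your proposal is correct and follows essentially the same approach as the paper: invoke Lemma~\ref{lem:finalpmas} to reduce to $\sum_{P\in\calP_F}|P|\,C(|P|)$, use Lemma~\ref{lem:higherrorelement} to bound each $|P|\le 2\eta$, then apply admissibility together with $\sum_{P\in\calP_F}|P|=n$. Your explicit handling of the leaf/$\eta=0$ edge case is a nice touch that the paper glosses over (the proof of Lemma~\ref{lem:higherrorelement} implicitly assumes $P$ arose from a merge), but otherwise the arguments are the same.
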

\begin{proof}
By Lemma~\ref{lem:higherrorelement}, any $P\in\calP_F$ must have an element with error at least $|P|/2$.  Thus, any $P\in \calP_F$ must have $|P| \leq 2\eta$.
 The actual \PMAs partition the ranks, so $\sum_{P\in \calP_F} |P| = n$.  By Lemma~\ref{lem:finalpmas}, and 
 since $C(\cdot)$ is admissible,
 the total number of element movements over $\sigma$ is
\[
O\left(\sum_{P\in \calP_F} |P| \cdot C(|P|) \right)\leq 
O\left(\sum_{P\in \calP_F} |P| \cdot C(2\eta)\right) \leq 
O(n C(\eta)).\qedhere
\]
\end{proof}

\begin{corollary}\label{cor:cost}
Using the classic optimal LLA of Itai et al.~\cite{ItaiKoRo81}, the \learnedLL is a deterministic list labeling data structure with with $O(\log^2 \eta)$ amortized element movements.  Using the best-known randomized \PMA of Bender et al.~\cite{BenderCoFa22},  the \learnedLL is a randomized list labeling data structure with $O(\log^{3/2} \eta)$ amortized element movements.
\end{corollary}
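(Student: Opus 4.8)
The plan is to obtain the corollary as a direct instantiation of Theorem~\ref{lem:costtoinsert}, which already states that the \learnedLL built on top of any black-box \PMA with \emph{admissible} amortized cost function $C(\cdot)$ incurs $O(n\,C(\eta))$ total element movements over a sequence of at most $n$ insertions with maximum error $\eta$ --- i.e.\ $O(C(\eta))$ amortized. So the only work is to (a) confirm that the two concrete cost functions of interest are admissible, and (b) substitute them in.

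For the deterministic claim, I would recall that the classic \PMA of Itai et al.~\cite{ItaiKoRo81} has amortized cost $C(n) = O(\log^2 n)$, and that $C(n)=\Theta(\log^2 n)$ is admissible (as already noted after Definition~\ref{def:admissible}): condition~(1) holds since $\log^2(2j)=O(\log^2 j)$, condition~(2) is immediate, and condition~(3) holds because the tail sum $\sum_{i\geq 1}\log^2(2^{j+i})/2^i=\sum_{i\geq 1}(j+i)^2/2^i=O(j^2)$. Plugging $C(\cdot)=\log^2(\cdot)$ into Theorem~\ref{lem:costtoinsert} gives $O(\log^2\eta)$ amortized cost, and since the black box is deterministic, so is the resulting data structure. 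For the randomized claim, I would use the history-independent \PMA of Bender et al.~\cite{BenderCoFa22}, whose \emph{expected} amortized cost is $C(n)=O(\log^{3/2}n)$; this function is admissible by the same routine checks. Here I would point out that Theorem~\ref{lem:costtoinsert} uses $C(\cdot)$ only as an upper bound on (now expected) insert cost, while the structural facts it relies on --- $|P|\leq 2\eta$ for every $P\in\calP_F$ (Lemma~\ref{lem:higherrorelement}) and $\sum_{P\in\calP_F}|P|=n$ --- hold deterministically regardless of the black box's randomness; hence taking expectations and using linearity yields $O(n\log^{3/2}\eta)$ expected total movements, i.e.\ $O(\log^{3/2}\eta)$ amortized. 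Finally I would note that the deletion reduction from the preliminaries adds only $O(1)$ to the amortized cost, so the bounds hold for the full dynamic structure.

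I do not anticipate a genuine obstacle: the corollary is essentially a plug-in. The only points needing (minor) care are verifying admissibility --- in particular the tail-sum condition~(3) of Definition~\ref{def:admissible} for $\log^{3/2}n$ and $\log^2 n$ --- and checking that the expected-cost guarantee of the randomized black box passes cleanly through the deterministic structural bounds of Lemmas~\ref{lem:finalpmas} and~\ref{lem:higherrorelement}, which it does because those bounds constrain $\calP_F$ pointwise rather than in expectation.
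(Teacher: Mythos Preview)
Your proposal is correct and follows exactly the approach the paper takes: the corollary is stated as an immediate consequence of Theorem~\ref{lem:costtoinsert}, with admissibility of $\Theta(\log^2 n)$ and $\Theta(\log^{3/2} n)$ already asserted right after Definition~\ref{def:admissible}. Your additional remarks on linearity of expectation for the randomized black box and on deletions are correct but go slightly beyond what the paper spells out.
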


\subsection{Optimality for Deterministic Learned List Labeling}\label{sec:lowerbound}

One may wonder if the \LearnedLL is optimal.  The \learnedLL is clearly optimal on the two extremes:  perfect predictions ($\eta = 0$) and completely erroneous predictions ($\eta=n)$.  When predictions are perfect, the \learnedLL does not need to move any element. When $\eta = n$, it is easy to create instances where the predictions give no information about the rank of each item (e.g.\ if each prediction is $\tilde{r_i} = 1$).
Bul\'{a}nek et al.~\cite{BulanekKoSa12} showed that, without predictions, any deterministic \PMA with $O(n)$ slots can be forced to perform $\Omega(\log^2 n)$ amortized element movements. Thus, the \learnedLL is optimal when $\eta=n$.

However, it is not immediately clear if the above idea extends to intermediate error. In this section, we show that for deterministic learned labeling data structures, the \LearnedLL is in fact optimal for all $\eta$.  
In particular, we show that for any $\eta$, there is an adversary that inserts elements with error at most $\eta$ that causes any deterministic LLA algorithm to incur $\Omega(n\log^2 \eta)$ total element movements. 

The intuition behind our proof is to split the learned list labeling problem into a sequence of $\Omega(n/\eta)$ subproblems of size $\eta$, each handled by its own \PMAA.
The given predictions within each subproblem are only accurate to within $\eta$---therefore, an adversary can force each \PMA to get \emph{no} information about an element's position within the \PMA, and by~\cite{BulanekKoSa12} perform no better than $\Omega(\log^2 \eta)$ amortized element movements.

The challenge is that the \PMAs for the subproblems are not actually separate:  we can't force the data structure to allocate $O(\eta)$ space to each \PMA and keep this allocation static throughout the execution.  The data structure may move elements between adjacent \PMAs or shift the number of slots available to each, an ability that disrupts a black-box lower bound.

\paragraph{Shifted list labeling.}
To address this issue, we describe a generalization of the list labeling problem.  We then show that this generalization has the same asymptotic cost as the classic list labeling problem, while being able to better handle the above issue where slots are dynamically allocated to a subproblem.
We call our generalization the \defn{shifted list labeling problem}.
 The shifted list labeling problem is similar to the online list labeling problem: $n$ elements from a totally-ordered universe arrive and must be assigned labels.  We assume these labels are integers without loss of generality.  
 At all times, the labels must respect sorted order: if $x < y$ then the label of $x$ must be less than the label of $y$.
 In the shifted list labeling problem labels are not necessarily in $\{1, \ldots, m\}$; instead, we require that the smallest and largest label differ by at most $m-1$.\footnote{Thus, the shifted list labeling problem generalizes the online list labeling problem: labels in $\{1, \ldots, m\}$ are sufficient, but not necessary, to satisfy this requirement.}  We call a data structure for this problem is a \defn{shifted \PMA}, and we call $m-1$ the \defn{maximum spread} of the shifted \PMA.

First, we show that shifted \PMAs have the same asymptotic cost as normal \PMAs.

\begin{lemma}
\label{lem:generalizedequivalent}
Assume that there exists a shifted \PMA $P^S$ of capacity $n$ with maximum spread $m-1$ that can handle a sequence of operations $\sigma$ with $T(n)$ total element movements.
Then there exists a \PMA $P^*$ of size $3m$ that can handle $\sigma$ with total element movements $2T(n)$.
\end{lemma}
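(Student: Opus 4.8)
The plan is to run $P^S$ as a subroutine and mirror it inside a fixed array of $3m$ slots, compensating for the fact that $P^S$'s labels are allowed to drift by occasionally \emph{recentering}. Maintain an integer offset $\theta$; whenever $P^S$ assigns label $s$ to an element, $P^*$ stores that element in slot $s+\theta$. Because at every moment the labels in use by $P^S$ span an interval of width at most $m-1$, I will always be able to pick $\theta$ so that this window lies in the middle block $[m+1,2m]$ of $\{1,\dots,3m\}$, leaving $m$ free slots on each side. I process $\sigma$ operation by operation: as long as mirroring $P^S$'s relabelings with the current $\theta$ keeps all of $P^*$'s slots inside $\{1,\dots,3m\}$, I just do that. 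If an operation of $P^S$ would force some slot of $P^*$ out of range, I \emph{recenter} instead: choose a new $\theta$ that places the resulting window back inside $[m+1,2m]$ and relabel every stored element accordingly (at most $n$ relabelings), which also realizes that operation. A global shift of labels is order preserving, so $P^*$ is at all times a valid \PMA on $3m$ slots storing the same set in the same order as $P^S$.

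For the running-time bound I split $P^*$'s moves into those made on non-recentering operations and those made during recenterings. On a non-recentering operation $P^*$ performs exactly the relabelings $P^S$ performs, so these contribute at most $T(n)$ in total. The crux is to charge the recenterings. Order them by time; I claim that between two consecutive recenterings (and between the first operation and the first recentering) $P^S$ itself performs at least as many moves as there are elements present right after the \emph{later} of the two events. Immediately after a recentering every stored element sits in a slot in $[m+1,2m]$; the next recentering is triggered only because the configuration (under the still-current $\theta$) now has a slot $\le 0$ or $\ge 3m+1$, and since the window has width at most $m-1<3m-1$ it cannot straddle both ends, so all currently stored elements lie strictly to one side of slot $2m$ (resp.\ $m+1$). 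Hence every element that was present at the earlier recentering has had its label cross that threshold, i.e.\ was relabeled at least once in the interim; additionally, every element inserted in the interim was relabeled when it was inserted. These two sets of relabelings are on disjoint sets of elements, so their total is at least the number of elements present at the later recentering — which is exactly the number of relabelings $P^*$ does to perform that later recentering.

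Putting this together: the total recentering cost of $P^*$ is at most the sum, over the inter-recentering blocks of operations, of the number of $P^S$-moves in that block; these blocks are pairwise disjoint sets of operations, so the sum is at most $T(n)$. Adding the at most $T(n)$ mirrored moves on non-recentering operations gives total cost at most $2T(n)$, proving the lemma. I expect the main obstacle to be the structural claim in the second paragraph: pinning down exactly when a recentering is triggered, verifying that a single $P^S$-operation cannot push labels past both boundaries of $\{1,\dots,3m\}$ at once, and being careful that "all old elements get relabeled" and "all newly inserted elements' insertions are relabelings" count genuinely disjoint movements, so the two bounds add rather than overlap.
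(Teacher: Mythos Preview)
Your proposal is correct and follows essentially the same approach as the paper: maintain an additive offset to mirror $P^S$'s labels inside $\{1,\dots,3m\}$, recenter/rebuild whenever a mirrored label would leave that range, and amortize each rebuild against the $P^S$-movements in the preceding block by observing that after a rebuild all slots lie in $[m+1,2m]$ while just before the next one they all lie strictly to one side of that window. Your treatment is in fact slightly more careful than the paper's in explicitly separating the charges to ``old'' elements (which must have crossed the threshold) and ``newly inserted'' elements (whose insertion is itself a movement), making the disjointness of the charged movements transparent.
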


\begin{proof}
Create a \PMA  $P^*$ of size $3m$.  Let $\ell_0$ be the label of the first item that is inserted in $P^S$.  We maintain that if an item $x$ has label $\ell_x$ in the shifted \PMA $P^S$, we give it label $\ell_x' = (\ell_x - \ell_0) + (m + 1)$ in $P^*$.  Every time an element $x$ is moved in $P^S$, the movement is mirrored in $P^*$, that is, $x$ is moved exactly once to set its label to $\ell_x'$.

We maintain the invariant that the label $\ell_x' \in \{1, \ldots, 3m\}$ for all $x$ in $P^*$. If $\ell_x$ for any item $x$ in $P^S$ is ever 
less than $\ell_0 - m$ or more than $\ell + 2m$ 
we \defn{rebuild}: we update $\ell_0$ to be the label of the smallest item currently in $P^S$, and recompute $\ell_x'$. That is, for all $x$ (with label $\ell_x$ in $P^S$), we set its label in $P^*$ to be $(\ell_x - \ell_0) + (m+1)$.  This requires moving each element once in $P^*$.

 We note that by construction, the difference between any two labels in $P^*$ is at most $m-1$.  We use this fact to bound the number element movements in $P^*$ below.
 
Consider the sequence of operations between two successive rebuilds $r$ and $r'$.  After $r$, all labels of elements in $P^*$ must be between $m$ and $2m$.  
Immediately before $r'$, there must be some element in $P^*$ with label less than $0$ or more than $2m$---but this means that no element in $P^*$ has label between $m$ and $2m$. This means that every element must have been moved at least once in $P^S$ between $r$ and $r'$.  
We move each element one more time during the rebuild $r'$.  This at most doubles the number of element movements.  Thus, there are $T(n)$ total element movements during all rebuilds; summing this and the $T(n)$ element movements that $P^*$ makes to mirror $P^S$, we obtain the lemma.
\end{proof}

The deterministic lower bound of Bul\'{a}nek et al.~\cite{BulanekKoSa12} is constructed using an adversary that builds the worst-case sequence of inserts one by one.  Our lower bound will use this approach as well.  
An \defn{adversary} for online list labeling maps from the labels of all currently stored items, to a new item to insert into the data structure.  
Repeatedly querying the adversary gives a fixed sequence of operations $\sigma^D$ for any deterministic \PMAA $D$. Let $\text{cost}(\sigma^D, D)$ be the total number of element movements incurred by $D$ on the operations in $\sigma^D$ generated by the adversary.
The goal of the adversary is to maximize $\min_D \text{cost}(\sigma^D, D)$.
Bul\'{a}nek et al.'s main result is an adversary that achieves $\min_D \text{cost}(\sigma^D, D) = \Omega(\log^2 k)$ for any deterministic \PMAA with capacity $k$.  For completeness, we summarize their result in Lemma~\ref{lem:optimallower}.  

\begin{lemma}[\cite{BulanekKoSa12}]
\label{lem:optimallower}
Define $\chi^k(m', |U|)$ to be the smallest number of element movements that the best adversary achieves for any deterministic \PMAA with capacity $k$ and $m'$ slots where inserts are from a universe $U$.
Then if 
$m' = (1 + \Theta(1))k$,
and $|U| \geq C'm$ for a sufficiently large constant $C'$, 
$\chi^k(m', |U|) = \Theta(k\log^2 k )$.
\end{lemma}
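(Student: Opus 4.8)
The plan is to obtain this statement as essentially a restatement of known results, so the ``proof'' is really an assembly job plus a check that the parameter regime lines up. The lower bound $\chi^k(m',|U|) = \Omega(k\log^2 k)$ is the main theorem of Bul\'{a}nek, Kouck\'{y}, and Saks~\cite{BulanekKoSa12}: for any constant $c>1$, any deterministic online labeling algorithm that stores $k$ items in an array of $m' = ck$ slots can be forced, by an adaptive adversary, to make $\Omega(k\log^2 k)$ relabels in total. The matching upper bound $\chi^k(m',|U|) = O(k\log^2 k)$ is immediate from the classical packed-memory-array / sparse-table construction of Itai et al.~\cite{ItaiKoRo81} (or Willard~\cite{Willard82Maintaining}), which runs in an array of size $m' = (1+\Theta(1))k$, uses labels from $\{1,\dots,m'\}\subseteq U$, and performs $O(k\log^2 k)$ relabels over any sequence of $k$ insertions. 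Combining the two bounds yields $\chi^k(m',|U|) = \Theta(k\log^2 k)$, so the only genuinely new thing to verify is that the Bul\'{a}nek et al. adversary operates within the hypotheses stated here --- namely $m' = (1+\Theta(1))k$ and $|U| \ge C'm$ for a sufficiently large constant $C'$.

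For that check I would recall the structure of their adversary. It is online and adaptive: given the current labels of the stored items it chooses the next item to insert, recursively partitioning the label range into blocks, always pushing new items into a ``densest'' block, and charging the cost against a weight/potential function over configurations. The construction is insensitive to the exact density constant $c$ (it only changes the hidden constants), which is why it applies for any $m' = (1+\Theta(1))k$. The universe enters in exactly one way: at each recursion step the adversary must be able to name a fresh item strictly between two currently stored items (or at one of the two ends). Since the recursion bottoms out at blocks of constant size and the adversary never needs finer positional resolution than a single array slot, it suffices that between any two positions the universe always contains an unused element; a universe whose size is a sufficiently large constant times $m$ (in particular at least $C'm$) guarantees this for the entire run of $k \le m'$ insertions. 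I would state this as the one small observation that is not verbatim in~\cite{BulanekKoSa12} and then invoke their cost bound as a black box. (One can also note that an unconditional $\Omega(\log k)$ amortized bound holds regardless, by~\cite{BulanekKaSa13}, which is consistent with, and weaker than, the $\Theta(k\log^2 k)$ claimed here.)

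The hard part, such as it is, is purely bookkeeping: reconciling the precise form of the hypotheses in~\cite{BulanekKoSa12} (which is typically stated with a fixed density constant and with the bound phrased in terms of the number of items inserted rather than in terms of $m'$ and $|U|$) with the form used here. I expect this to go through without any new idea, because their potential argument depends on the array size only through its ratio to $k$ and uses the universe only for the ``fresh element between two positions'' property described above; so shrinking the universe assumption to linear in $m$ and allowing an arbitrary density constant in $(1+\Theta(1))$ does not disturb the argument, and the $\Theta$ follows by pairing it with the Itai et al. upper bound.
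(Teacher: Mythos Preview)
Your proposal is correct and matches the paper's treatment: the paper does not prove this lemma at all but simply states it as a summary of the main result of~\cite{BulanekKoSa12} (with the upper bound implicit from~\cite{ItaiKoRo81}). Your additional discussion of why the universe-size and density hypotheses are compatible with the Bul\'{a}nek et al.\ adversary is more than the paper provides, and is a reasonable sanity check, but no new argument is needed here.
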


The goal of this section is to give an adversary such that for any deterministic learned list labeling data structure $D$ with $m = O(n)$ slots, the adversary forces $D$ to incur $\Omega(n\log^2 \eta)$ element movements, even when all items inserted by the adversary have predicted rank with error at most $\eta$. 
 Recall $c = m/n$; we assume that $c$ is a constant larger than $1$.

\paragraph{Lower bound setup.} Let $U$ be a universe of sufficient size for the lower bound in Lemma~\ref{lem:optimallower}.
We partition the universe $U'$ of size $|U|n/\eta$ into $n/\eta$ contiguous \defn{blocks} of size $|U|$.  We assume without loss of generality that $n/\eta$ is an integer (if not, we round $n$ up to the next multiple of $\eta$).
We call the $b$th block of $U$ \defn{block $b$} (i.e.\ we refer to blocks by an index $b\in \{1, \ldots n/\eta\}$).

The predicted rank of any $x$ in block $b$ is $b\eta$.
The adversary (described below) inserts exactly $\eta$ elements from each block in~\ref{thm:lower}; therefore, the error of any item $\eta_x \leq \eta$.

Let $S_t$ be the set of items stored in $D$ after the first $t-1$ operations (i.e.\ immediately before the $t$th operation).  We say that $S_t$ is the set of items stored \defn{at time $t$}.
For any block $b$ with at least two elements in $S_t$, let $\ell_b^t$ and $s_b^t$ be the largest and smallest label of any element in $S_t$ from block $b$.  Call $\ell_b^t - s_b^t$ the \defn{block segment size} of $b$ at time $t$.
We say that a block $b$ is \defn{active} at time $t$ if either it has less than two items at time $t$, or if its block segment size  is $\ell_b^t - s_b^t\leq 2c\eta$.
The adversary uses the current set of labels in the \PMA to determine the next item to insert.  The crux of our lower bound is that the adversary's $t$th insert is from a block that is active at time $t$.
This means that the set of items from any block $b$ form a shifted \LLA with maximum spread $2c\eta$.  This is formalized in Lemma~\ref{lem:blocksegment}.

\begin{lemma}\label{lem:blocksegment}
    Consider a block $b$, and consider a sequence of inserts $\sigma$ into $D$.  Let $\sigma^b_1, \sigma^b_2, \ldots, \sigma^b_{k}$ be the operations in $\sigma$ that insert an item from $b$. Assume that $b$ is active immediately before each $\sigma^b_i$. For $i\in \{1, \ldots, k\}$, let $L^b_i$ be the set of labels of elements from block $b$ immediately \emph{before} insert $\sigma^b_i$. Then, the following holds.
    \begin{enumerate}[label=(\alph*), noitemsep, leftmargin=*]
        \item \label{parta} The sequence $L^b_i$ for $i\in \{1, \ldots, k\}$ constitutes a shifted \PMA $A^S_b$ with maximum spread $2c\eta$. 
        \item \label{partb} There exists a \PMA $A_b$ with size $6c\eta+3$ 
        that moves at most twice as many elements as $A_b^{S}$ on the sequence of inserts $\sigma^b_1, \sigma^b_2, \ldots, \sigma^b_{k}$.  
        \item \label{partc}  Let $T(A_b)$ be the total number of element movements in $A_b$. Then, the total number of element movements in $D$ is lower bounded by  $\sum_{b=1}^{n/\eta} T(A_b)/2$. 
    \end{enumerate}
\end{lemma}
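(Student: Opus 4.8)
The plan is to prove the three parts in sequence, with each part reducing to the previous one plus a short argument. For part~\ref{parta}, I would check directly that the labels $L^b_1, L^b_2, \ldots$ satisfy the definition of a shifted \PMA with maximum spread $2c\eta$. Between consecutive inserts $\sigma^b_i$ and $\sigma^b_{i+1}$, the data structure $D$ may process inserts from other blocks and move around elements of block $b$; I would argue that regardless, the restriction of $D$'s behavior to block $b$ is a valid sequence of operations for a shifted list labeling data structure: sorted order among block-$b$ elements is preserved because $D$ maintains global sorted order, and each $\sigma^b_i$ inserts an element between its block-$b$ predecessor and successor. The spread bound is exactly the hypothesis that $b$ is active immediately before each $\sigma^b_i$: by definition, when active either block $b$ has fewer than two elements (spread undefined/zero, trivially $\le 2c\eta$) or $\ell_b^t - s_b^t \le 2c\eta$. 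So $A^S_b$ is a shifted \PMA with maximum spread $2c\eta$.

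For part~\ref{partb}, I would invoke Lemma~\ref{lem:generalizedequivalent} with $m - 1 = 2c\eta$, i.e.\ $m = 2c\eta + 1$, so that $3m = 6c\eta + 3$. That lemma converts the shifted \PMA $A^S_b$ (capacity $\le \eta$, since the adversary inserts exactly $\eta$ elements from each block) into a genuine \PMA $A_b$ of size $3m = 6c\eta + 3$ that makes at most $2T(A^S_b)$ element movements on the same insertion sequence $\sigma^b_1, \ldots, \sigma^b_k$. This is a direct application; the only thing to verify is that the capacity of $A_b^S$ is at most $\eta$ so the size bound in Lemma~\ref{lem:generalizedequivalent} is as claimed, which holds because each block contributes exactly $\eta$ inserts.

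For part~\ref{partc}, I would observe that the element movements of $D$ naturally decompose: every element movement performed by $D$ moves an element belonging to exactly one block $b$. Fix a block $b$; the total number of movements $D$ makes on elements of block $b$ is at least the number of movements that the shifted \PMA $A^S_b$ must make on the sequence $\sigma^b_1, \ldots, \sigma^b_k$, because $A^S_b$ is by construction exactly the record of the block-$b$ labels at the times $\sigma^b_i$ occur --- any movement of a block-$b$ element recorded by $A^S_b$ between $\sigma^b_i$ and $\sigma^b_{i+1}$ corresponds to an actual movement of that element in $D$. Hence the total movements of $D$ is at least $\sum_{b=1}^{n/\eta} T(A^S_b)$. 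Combining with part~\ref{partb}, which gives $T(A_b) \le 2 T(A^S_b)$, i.e.\ $T(A^S_b) \ge T(A_b)/2$, yields the total cost of $D$ is at least $\sum_{b=1}^{n/\eta} T(A_b)/2$, as desired.

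The main obstacle I anticipate is part~\ref{partc}, specifically making rigorous the claim that $A^S_b$'s movement count lower-bounds $D$'s movements restricted to block $b$. The subtlety is that $A^S_b$ as defined only "observes" the label configuration at the discrete times $\sigma^b_i$; one must argue that the minimum number of single-element relabels needed to transform the configuration $L^b_i$ into a configuration consistent with having inserted the next element and then reaching $L^b_{i+1}$ is at most the number of block-$b$ relabels $D$ actually performed in that interval. This requires being careful about what counts as an "operation" of the shifted \PMA and ensuring we are not double-counting or under-counting; I would handle it by defining $A^S_b$'s behavior to literally replay $D$'s block-$b$ relabels (so $T(A^S_b)$ equals $D$'s block-$b$ movement count by fiat), and then the only real content is parts~\ref{parta} and~\ref{partb}, with part~\ref{partc} becoming the easy summation over blocks.
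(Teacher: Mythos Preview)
Your proposal is correct and follows essentially the same route as the paper: part~\ref{parta} from sorted order plus the active hypothesis, part~\ref{partb} as a direct application of Lemma~\ref{lem:generalizedequivalent} with $m=2c\eta+1$, and part~\ref{partc} by summing the per-block shifted-\PMA costs and halving via part~\ref{partb}. Your treatment of the part~\ref{partc} subtlety---defining $A^S_b$ to literally replay $D$'s block-$b$ relabels so that $T(A^S_b)$ equals $D$'s block-$b$ movement count by construction---is exactly what the paper does implicitly when it says ``we can construct a shifted \PMA $A^S_b$ such that the labels after insert $i$ are $L^b_{i+1}$''; you have simply been more explicit about it than the paper is.
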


\begin{proof}

The labels in $L^b_i$ must be in sorted order by correctness of $D$.  Furthermore, the largest and smallest label in $L^b_i$ must differ by at most $2c\eta$ since block $b$ is active at operation $\sigma_i^b$.  Thus, we can construct a shifted \PMA $A^S_b$, such that the labels after insert $i$ are $L^b_{i+1}$. This proves Part~\ref{parta}.

Part~\ref{partb} follows from Lemma~\ref{lem:generalizedequivalent}.

To show Part~\ref{partc}, let $T(A^S_b)$ be the total number of element movements in $A^S_b$.  If an element's label changes in $A^S_b$, it must also change at least once in $D$.  Therefore, the total number of element movements in $D$ is lower bounded by $\sum_b T(A^S_b)$.  We have that $T(A_b) \leq 2T(A^S_b)$ by Lemma~\ref{lem:generalizedequivalent}.
\end{proof}

\paragraph{Adversary definition and final lower bound.} We define our adversary as follows.

First, at time $t$, the adversary arbitrarily chooses a block $b$ such that: (1) $b$ is active at time $t$, and (2) at most $\eta$ elements from $b$ are in $S_t$.
Let $A_b$ be the deterministic \PMA for block $b$ at time $t$ from Lemma~\ref{lem:blocksegment}.  The adversary of~\cite{BulanekKoSa12} from Lemma~\ref{lem:optimallower} (with capacity $k \triangleq \eta$ and $m \triangleq 6c\eta$ slots) maps the set of labels of items in $A_b$ at $t$ to a new item $x$ from block $b$. Then our adversary inserts $x$ with predicted rank $b\eta$.

If no block with less than $\eta$ elements is active at time $t$, the adversary performs the \defn{clean up step}:
for each block $b$ with less than $\eta$ elements,
the adversary inserts arbitrary elements from $b$ (with prediction $b\eta)$ until $\eta$ elements have been inserted from $b$.
After this is done, the adversary has inserted $n$ items.
The clean up step is to ensure that the predicted ranks are accurate to within $\eta$.

Now, we analyze this adversary.
First, we show that there are many blocks for which all items were inserted using the above adversary.

\begin{lemma}
\label{lem:mostblocksfull}
There are at least $n/(2\eta)$ blocks $b$ such that the adversary inserts at least $\eta$ items from $b$ before the cleanup step.
\end{lemma}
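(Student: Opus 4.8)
The plan is a pigeonhole / counting argument applied to the configuration of $D$ at the instant the main loop of the adversary ends, just before the clean up step; write $t^*$ for this time. By definition the main loop continues precisely while some block holding fewer than $\eta$ elements is active, so at time $t^*$ every block containing fewer than $\eta$ elements of $S_{t^*}$ is \emph{inactive}. Since a block is active iff it has at most one stored element or its block segment size is at most $2c\eta$, an inactive block has at least two elements and block segment size $\ell_b^{t^*} - s_b^{t^*} > 2c\eta$. Call a block \emph{full} if it holds $\eta$ elements at time $t^*$; because there are no deletions, a block is full at $t^*$ exactly when all $\eta$ of its eventual elements were inserted before the clean up step, so it suffices to show at least $n/(2\eta)$ blocks are full, equivalently that at most $n/(2\eta)$ blocks are not full.

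First I would isolate the structural fact that the block segments of distinct blocks lie in pairwise disjoint label ranges. The universe is partitioned into contiguous blocks, so every element of block $b$ is smaller than every element of block $b'$ whenever $b < b'$; as $D$ keeps its elements sorted by label, this forces $\ell_b^{t^*} < s_{b'}^{t^*}$ for any two nonempty blocks $b < b'$. Hence the intervals $[s_b^{t^*}, \ell_b^{t^*}]$, taken over blocks with at least two elements, are disjoint subintervals of $\{1, \dots, m\}$, and therefore $\sum_b (\ell_b^{t^*} - s_b^{t^*}) \le m - 1 < cn$, where $m = cn$ is the number of slots of $D$.

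Combining the two: if more than $n/(2\eta)$ blocks were not full at time $t^*$, each of them would be inactive and hence contribute more than $2c\eta$ to the sum of block segment sizes, giving a total exceeding $(n/(2\eta)) \cdot 2c\eta = cn$ and contradicting the bound from the previous paragraph. Thus at most $n/(2\eta)$ of the $n/\eta$ blocks are not full, so at least $n/\eta - n/(2\eta) = n/(2\eta)$ are full, which is the claim. I do not expect a genuine obstacle here; the only care needed is bookkeeping — evaluating ``active/inactive'' at the correct time $t^*$ and invoking the no-deletions property to equate ``full at $t^*$'' with ``filled before the clean up step'' — while the substantive idea is simply that disjoint block segments, each of length more than $2c\eta$ once a block is inactive, cannot all fit into $D$'s $cn$ slots.
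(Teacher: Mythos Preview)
Your argument is correct and is essentially the same as the paper's: both bound the sum of block segment sizes by $m=cn$ using disjointness, then use a pigeonhole/Markov step to conclude that at most $n/(2\eta)$ blocks can have segment size exceeding $2c\eta$, hence at least $n/(2\eta)$ blocks are active at $t^*$ and therefore full. You frame it via the contrapositive (non-full $\Rightarrow$ inactive $\Rightarrow$ large segment) and spell out the disjointness justification more explicitly, but the substance is identical.
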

\begin{proof}
Let $t'$ be the time when the adversary enters the cleanup step.

Blocks are disjoint, so  at $t'$, the sum all block segment sizes is at most $m$ and the average block segment size is at most $c\eta$.
Therefore, at most $n/(2\eta)$ blocks have block segment size more than $2c\eta$; thus, at least $n/(2\eta)$ blocks have block segment size at most $2c\eta$.  Each of these at least $n/2\eta$ blocks must contain $\eta$ elements.
 \end{proof}

Now we are ready to complete the proof of our lower bound.

\begin{theorem}\label{thm:lower}
For any $\eta$, 
there exists an adversary 
that inserts $n$ items, each with error at most $\eta$, such that any deterministic data structure for learned list labeling with capacity $n$ and $m = cn$ slots incurs $\Omega(n\log^2 \eta)$ total element movements.
\end{theorem}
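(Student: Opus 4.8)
The plan is to assemble the pieces already developed in this section into a counting argument. First I would fix the adversary described immediately above: it always inserts from a block that is active and has fewer than $\eta$ stored elements, running the Bul\'{a}nek et al.\ adversary (Lemma~\ref{lem:optimallower}) independently inside each block with capacity parameter $k = \eta$ and $m' = 6c\eta$ slots, until no such block exists, at which point the clean-up step pads every under-full block up to exactly $\eta$ elements. Since exactly $\eta$ items are inserted from each of the $n/\eta$ blocks, the total number of inserts is $n$, and every item in block $b$ has predicted rank $b\eta$ while its true rank lies within the $\eta$ ranks owned by that block, so $\eta_x \le \eta$ for all $x$ — this verifies the error constraint demanded by the theorem.

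Next I would invoke Lemma~\ref{lem:mostblocksfull}: at the moment the clean-up step begins, at least $n/(2\eta)$ blocks already contain $\eta$ elements, and crucially every one of these $\eta$ inserts was produced by the block-$b$ instance of the Bul\'{a}nek adversary while block $b$ was active. For each such "full" block $b$, Lemma~\ref{lem:blocksegment}\ref{parta} says the pre-insert label sets form a shifted \PMA $A^S_b$ of maximum spread $2c\eta$, and \ref{partb} converts this into an ordinary \PMA $A_b$ of size $6c\eta + 3 = (1+\Theta(1))\eta$ that makes at most twice as many moves. Because $A_b$ is a deterministic \PMA with capacity $\eta$ on which the Bul\'{a}nek adversary was run, Lemma~\ref{lem:optimallower} gives $T(A_b) = \Omega(\eta \log^2 \eta)$. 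By Lemma~\ref{lem:blocksegment}\ref{partc}, the total cost of $D$ is at least $\sum_b T(A_b)/2$, where the sum ranges over all blocks; restricting to the $\ge n/(2\eta)$ full blocks, this is at least $(n/(2\eta)) \cdot \Omega(\eta \log^2 \eta)/2 = \Omega(n \log^2 \eta)$, which is the claimed bound.

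The one subtlety I need to handle carefully — and what I expect to be the main obstacle — is the interface between the single global data structure $D$ and the per-block adversaries. The Bul\'{a}nek adversary for block $b$ expects to see "the labels of the items currently stored in $A_b$," but $A_b$ is a fictional object derived from $D$'s actual labels on block-$b$ elements; I must argue that the label multiset $D$ currently assigns to block-$b$ elements is exactly what $A_b$ would show, so that feeding those labels to the Bul\'{a}nek adversary is legitimate and the resulting move-count lower bound transfers. This is precisely where activeness is used: as long as block $b$ stays active (spread $\le 2c\eta$), the restriction of $D$'s configuration to block $b$ is a valid shifted-\PMA configuration, and since $D$ is deterministic the induced sequence on block $b$ is a fixed deterministic \PMA execution to which Lemma~\ref{lem:optimallower} applies verbatim. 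I would also note that movements $D$ makes on non-block-$b$ elements only help, and movements within block $b$ are counted once each in the $\sum_b T(A^S_b)$ lower bound because distinct blocks occupy disjoint elements — so there is no double counting across blocks.

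Finally I would remark on why the construction is well-defined: the adversary only ever needs a block that is both active and under-full, and Lemma~\ref{lem:mostblocksfull}'s averaging argument guarantees that as long as fewer than $n$ items have been inserted there is budget for the process to make progress (either some under-full block is active, or we enter clean-up); since each step inserts one item and we stop at $n$ items, the process terminates, and the clean-up inserts — which may be adversarially bad for $D$ but on which we claim no lower bound — only add to $D$'s cost, so they cannot hurt the bound. Combining, any deterministic learned list labeling structure with $m = cn$ slots incurs $\Omega(n\log^2\eta)$ total movements, matching Theorem~\ref{lem:costtoinsert} up to constants and establishing optimality.
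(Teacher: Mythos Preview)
Your proposal is correct and follows essentially the same approach as the paper: invoke Lemma~\ref{lem:mostblocksfull} to obtain at least $n/(2\eta)$ blocks that fill before clean-up, apply Lemma~\ref{lem:blocksegment} to each such block to extract a bona fide \PMA $A_b$ on which the Bul\'{a}nek et al.\ adversary was run, use Lemma~\ref{lem:optimallower} to get $T(A_b)=\Omega(\eta\log^2\eta)$, and sum via Lemma~\ref{lem:blocksegment}\ref{partc}. Your discussion of the interface between $D$ and the per-block adversaries and of well-definedness is more explicit than the paper's terse proof, but the logical structure is identical.
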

\begin{proof}
Let $B$ be the
set of 
blocks from Lemma~\ref{lem:mostblocksfull} that have $\eta$ inserted elements before the cleanup step; 
we have $|B| \geq n/2\eta$.

Consider a block $b\in B$.  Let $A_b$ be the \PMA for elements in $b$ from Lemma~\ref{lem:generalizedequivalent}.  All inserts of elements from $b$ are from the adversary in Lemma~\ref{lem:optimallower}; therefore, we must have $T(A_b) = \Omega(\eta \log^2 \eta)$.

Summing over the blocks in $B$, and using the lower bound on element movements in $D$ from Lemma~\ref{lem:blocksegment}, there must be $\Omega(n\log^2 \eta)$ element movements in $D$.
\end{proof}

\subsection{Stochastic Predictions with Adversarial Insert Order}\label{sec:stochastic_predictions}
In this section, we assume that each predicted rank is the result of adding random noise to the true rank.  In particular, given an \emph{unknown} distribution $\calD$, for any item $x_i$, we have $\tilde{r_i} = r_i + e_i$ where each $e_i$ is sampled independently from $\calD$.  This means the that error of each item is essentially sampled from $\calD$, as $\eta_i = |e_i|$.   If $\calD$ has \defn{expectation $\mu$} and \defn{variance $s^2$}, we show that under such predictions the \learnedLL has expected amortized cost $O(C((\mu + s^2)^2))$, where
$C(n)$ denotes an admissible cost function of its constituent \PMAs.

The arrival order of the elements can still be chosen adversarially, with full knowledge of the predicted ranks.  In particular, we can describe our input model with the following adversary. The adversary first chooses $n$ elements of the set $S$ from a totally ordered universe.  For each element $x_i \in S$, the error $e_i$ is then sampled from $\calD$.  Finally, the adversary can look at the predicted ranks and choose any insert order of elements in $S$.

First, we bound the probability an element achieves a certain error---this is useful because Lemma~\ref{lem:higherrorelement} shows that if $P$ is an actual \PMA, then 
it must contain an element $x_j$ with large $\eta_j$.
\begin{lemma}
 \label{lem:cheby}
 Assume that $\tilde{r_j} = r_j + e_j$ where $e_j$ is drawn independently from a distribution $\calD$ with expectation $\mu$ and variance $s^2$.
Then $\Pr[\eta_j \geq k(|\mu| + s^2)] \leq 1/k^2$.
\end{lemma}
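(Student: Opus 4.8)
The plan is to prove the concentration bound $\Pr[\eta_j \geq k(|\mu| + s^2)] \leq 1/k^2$ by relating the deviation of $\eta_j = |e_j|$ from the threshold to a deviation of $e_j$ from its mean $\mu$, and then applying Chebyshev's inequality. Chebyshev gives $\Pr[|e_j - \mu| \geq t\, s] \leq 1/t^2$ for any $t > 0$, so I would like to choose $t$ so that $t\, s$ is at most $k(|\mu| + s^2) - |\mu|$ (because $\eta_j \geq k(|\mu|+s^2)$ forces $|e_j| \geq k(|\mu|+s^2)$, hence $|e_j - \mu| \geq k(|\mu|+s^2) - |\mu|$ by the triangle inequality).

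First I would observe that if $\eta_j \geq k(|\mu| + s^2)$ then $|e_j| \geq k(|\mu| + s^2)$, and since $k(|\mu|+s^2) \geq |\mu|$ (as $k \geq 1$ and $s^2 \geq 0$), we get $|e_j - \mu| \geq k(|\mu|+s^2) - |\mu| \geq k(|\mu|+s^2) - k|\mu| = k s^2$ — wait, that last step uses $k|\mu| \geq |\mu|$, which is fine since $k\ge 1$, so indeed $k(|\mu|+s^2)-|\mu| \ge k s^2$ whenever $k|\mu| \ge |\mu|$, i.e. always. Thus the event implies $|e_j - \mu| \geq k s^2 = (k s)\cdot s$. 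Applying Chebyshev with $t = k s$ gives $\Pr[|e_j - \mu| \geq k s \cdot s] \leq 1/(ks)^2 = 1/(k^2 s^2) \leq 1/k^2$ provided $s \geq 1$; for $s < 1$ one instead bounds $k(|\mu|+s^2) - |\mu| \ge k s^2$ still holds but one must be more careful, so the cleaner route is to bound $k(|\mu|+s^2)-|\mu|$ below by $\max(k s^2, \text{something}\cdot s)$.

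The main subtlety — and the place I would be most careful — is handling the regime where $s$ is small (so $s^2 \le s$) versus large ($s^2 \ge s$), since the bound is stated in terms of $s^2$, not $s$, which only dominates $s$ when $s\ge 1$. The trick is: from $|e_j-\mu| \ge k(|\mu|+s^2)-|\mu|$, we can write $k(|\mu|+s^2)-|\mu| = (k-1)|\mu| + k s^2 \ge k s^2$; if $s \le 1$ then $s^2 \le s$ is the wrong direction, so instead note $k s^2$ could be small. The resolution is that the statement is only meaningful/used for the conclusion $O(\log^2(|\mu|+s^2))$, and in fact one should just verify $k(|\mu|+s^2) - |\mu| \ge k s \cdot \min(s,1)$ or, most robustly, split into cases $s\ge 1$ and $s<1$ and in the latter case use that $(|\mu|+s^2)$ is replaced by a quantity $\ge s$ after squaring appears elsewhere. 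I expect the intended (simplest) argument just assumes the relevant regime makes $k(|\mu|+s^2) \ge k s^2 + |\mu|$ usable directly: since $|e_j - \mu|\ge k s^2$ and Chebyshev gives $\Pr[|e_j-\mu|\ge \lambda s]\le s^2/\lambda^2$, setting $\lambda s = k s^2$, i.e. $\lambda = k s$, yields $\Pr \le s^2/(k^2 s^2) = 1/k^2$, and this works whenever $s > 0$; the degenerate case $s = 0$ makes $e_j = \mu$ almost surely so $\eta_j = |\mu| < k(|\mu|+0)$ for $k>1$ and the bound is trivial. So the clean writeup is: triangle inequality to reduce to $|e_j - \mu| \ge k s^2$ (using $k\ge 1$), then Chebyshev with deviation $k s \cdot s$. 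The one genuine obstacle is making sure the inequality $k(|\mu|+s^2) - |\mu| \ge k s^2$ is valid, which holds because $(k-1)|\mu| \ge 0$; everything else is a direct substitution.
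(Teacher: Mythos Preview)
Your instinct that the case $s < 1$ is delicate was correct; unfortunately, your final ``clean writeup'' resolves it via an algebra slip rather than a real argument. You write that Chebyshev gives $\Pr[|e_j-\mu|\ge \lambda s]\le s^2/\lambda^2$, but the correct bound is $\Pr[|e_j-\mu|\ge \lambda s]\le 1/\lambda^2$ (equivalently, $\Pr[|e_j-\mu|\ge t]\le s^2/t^2$). With $t = ks^2$ this yields $s^2/(ks^2)^2 = 1/(k^2 s^2)$, which is only at most $1/k^2$ when $s\ge 1$ --- exactly the obstruction you flagged a few lines earlier and then talked yourself out of. Your earlier computation ``$\Pr[|e_j-\mu|\ge ks\cdot s]\le 1/(ks)^2 = 1/(k^2 s^2)$'' was the right one.

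In fact the lemma as stated is false for small $s$. Take $\mu = 0$ and let $e_j = \pm 1/2$ each with probability $1/2$, so $s^2 = 1/4$ and $\eta_j = 1/2$ almost surely. For $k=2$ we have $k(|\mu|+s^2) = 1/2$, yet $\Pr[\eta_j \ge 1/2] = 1 > 1/4 = 1/k^2$. The paper's own proof contains the identical slip: it asserts ``$\Pr[|e_j-\mu|\ge ks^2]\le 1/k^2$'' directly from Chebyshev, which is off by a factor of $s^2$. The statement becomes correct if one replaces $s^2$ by $s$ in the threshold, or simply assumes $s\ge 1$; for the paper's downstream application (integer ranks, thresholds of size at least $(|\mu|+s^2)^2$) either patch is harmless, but as a standalone probability inequality both your argument and the paper's have a genuine gap here.
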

\begin{proof}
Consider an element $x_j$.
Since $\eta_j = |e_i|$,
$
\Pr\left[ \eta_j \geq |\mu| + ks^2\right] \leq  \Pr\left[\left|e_j - \mu\right| \geq ks^2\right]
$.
By Chebyshev's inequality, 
$
\Pr\left[\left|e_j - \mu\right| \geq ks^2\right] \leq 1/k^2
$.
Combining and 
taking an upper bound, for $k > 1$ we have
$
\Pr\left[\eta_j \geq k|\mu| + ks^2 \right] 
\leq \Pr\left[\eta_j \geq |\mu| + ks^2 \right] 
\leq 1/k^2
$.
\end{proof}

Below, we state and prove the performance bound of the \learnedLL in this model.

 \begin{theorem}
 Consider a LearnedLL data structure using a black box \PMAA with admissible cost $C(\cdot)$.  
 If for each $x_i$, 
 $\tilde{r_i} = r_i + e_i$
 where $e_i$ is drawn independently from a distribution $\calD$ with expectation $\mu$ and variance $s^2$, 
 then  the total element movements 
 over a sequence of at most $n$ insertions 
 is $O\left(n\cdot C((|\mu| + s^2)^2)\right)$
\end{theorem}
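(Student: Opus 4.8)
The plan is to reduce the stochastic bound to the worst-case machinery already established, by combining Lemma~\ref{lem:finalpmas} with a probabilistic version of the structural fact in Lemma~\ref{lem:higherrorelement}. By Lemma~\ref{lem:finalpmas}, the total cost is $O\left(\sum_{P\in\calP_F}|P|\cdot C(|P|)\right)$, where $\calP_F$ is the final set of actual \PMAs. We will not bound each $|P|$ deterministically (as in Theorem~\ref{lem:costtoinsert}); instead, we control this sum in expectation by charging, for each leaf rank $r$, the random quantity $C(|P_r|)$ where $P_r\in\calP_F$ is the final \PMA containing rank $r$. Summing over leaves, $\sum_{P\in\calP_F}|P|\cdot C(|P|)=\sum_{r=1}^n C(|P_r|)$, so by linearity it suffices to show $\E[C(|P_r|)] = O\left(C((|\mu|+s^2)^2)\right)$ for each fixed rank $r$.

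To bound $\E[C(|P_r|)]$, the key observation is that $|P_r|$ is always a power of $2$ (each actual \PMA corresponds to a node of the complete binary tree $T$), so I would write $\E[C(|P_r|)] = \sum_{h\ge 0} \Pr[|P_r| = 2^h]\cdot C(2^h) \le \sum_{h\ge 0}\Pr[|P_r|\ge 2^h]\cdot C(2^h)$. If $|P_r|\ge 2^h$, then rank $r$ lies in an actual \PMA of capacity at least $2^h$; by Lemma~\ref{lem:higherrorelement} that \PMA contains some element with prediction error at least $2^{h}/2 = 2^{h-1}$. That element has one of at most $2\cdot 2^h$ possible ranks (those in the parent node of capacity $2^{h+1}$ that contains rank $r$ — here I need to be a little careful about which node of $T$ actually contains $r$ at capacity $2^h$, but it is always a fixed window of $O(2^h)$ consecutive ranks determined by $r$ and $h$). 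Applying Lemma~\ref{lem:cheby} with threshold $2^{h-1}$: a single element has error $\ge 2^{h-1}$ with probability at most $O\left((|\mu|+s^2)^2/2^{2h}\right)$ once $2^{h-1} \ge (|\mu|+s^2)$. Union-bounding over the $O(2^h)$ relevant ranks gives $\Pr[|P_r|\ge 2^h] = O\left((|\mu|+s^2)^2/2^{h}\right)$ for $h$ large enough that $2^{h-1}\ge |\mu|+s^2$, i.e.\ for $2^h \ge 2(|\mu|+s^2)$.

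Now split the sum at $h^\star := \lceil \log(2(|\mu|+s^2))\rceil$. For $h\le h^\star$, we bound $\Pr[|P_r|\ge 2^h]\le 1$ and use admissibility condition~(1): since $2^h = O((|\mu|+s^2)^2)$ for each such $h$ (in fact $2^{h^\star}=\Theta(|\mu|+s^2)$), each term is $C(2^h)=O(C((|\mu|+s^2)^2))$, and there are $O(\log(|\mu|+s^2))$ such terms; by admissibility~(2), $C((|\mu|+s^2)^2)=\Omega(\log((|\mu|+s^2)^2))=\Omega(\log(|\mu|+s^2))$, so this whole head sum is $O\left((\log(|\mu|+s^2))\cdot C((|\mu|+s^2)^2)\right)$ — hmm, that carries an extra log factor, so I would instead be more careful: use $\sum_{h\le h^\star}\Pr[|P_r|\ge 2^h]\,C(2^h)$ and recognize that the probabilities actually start decaying geometrically already a constant number of steps above $h^\star$; for $h\le h^\star$ I should use a sharper tail estimate or simply absorb the head into a slightly larger argument of $C$, noting $2^{h^\star}=\Theta(|\mu|+s^2)$ and that a geometric-type bound $\Pr[|P_r|\ge 2^h]=O(2^{h^\star}/2^h)\cdot$(something) may be salvageable even in the head regime via a direct argument; the cleanest route is to show the tail bound $\Pr[|P_r|\ge 2^h]=O((|\mu|+s^2)^2/2^{2h})\cdot 2^h$ is really $O((|\mu|+s^2)^2/2^h)$ valid for \emph{all} $h$ once we absorb constants, so that the full sum is $\sum_h \min\{1,(|\mu|+s^2)^2/2^h\}\,C(2^h)$. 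For the tail $h> h^\star$, admissibility condition~(3) applied with $2^j=\Theta((|\mu|+s^2)^2)$ — noting $C(2^j)=O(C((|\mu|+s^2)^2))$ — gives $\sum_{i\ge 1} C(2^{j+i})/2^i = O(C((|\mu|+s^2)^2))$, which is exactly the geometric tail we get after substituting the $O((|\mu|+s^2)^2/2^h)$ probability bound. So the total is $O(C((|\mu|+s^2)^2))$ as claimed, and multiplying by $n$ (the number of leaves) finishes the bound. The main obstacle is getting the head of the sum (small $h$) to come out to $O(C((|\mu|+s^2)^2))$ rather than $O(\log(\cdot)\cdot C((|\mu|+s^2)^2))$; the fix is to observe that for $|P_r|\ge 2^h$ to hold we also need the \emph{sibling} of some node to have been at density $\ge 1/2$, and to push the window/union-bound argument so that the $\Pr[|P_r|\ge 2^h]$ bound degrades by a geometric factor even for $h$ just below $h^\star$, together with admissibility~(1) absorbing the remaining $O(1)$-many genuinely un-decayed terms into $C((|\mu|+s^2)^2)$.
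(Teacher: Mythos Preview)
Your overall strategy---rewrite $\sum_{P\in\calP_F}|P|\,C(|P|)=\sum_{r=1}^n C(|P_r|)$ and bound $\E[C(|P_r|)]$ via tail probabilities $\Pr[|P_r|\ge 2^h]$---is sound and is essentially a per-rank reformulation of what the paper does. But there are two concrete problems, one of which is a real gap.

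\textbf{The union-bound window is not justified.} You assert that if $|P_r|\ge 2^h$, then the high-error element guaranteed by Lemma~\ref{lem:higherrorelement} has true rank among ``at most $2\cdot 2^h$ possible ranks \dots\ determined by $r$ and $h$''. This is false as stated. Lemma~\ref{lem:higherrorelement} only says some element \emph{stored} in the \PMA has large error; its true rank $r_j$ need not lie in (or near) the \emph{assigned} ranks of that \PMA. Concretely, an adversary could route many elements whose true ranks are near $n$ into the \PMA containing assigned rank $r=1$ (all of them would have huge error, but none has rank in your window). The paper handles exactly this with a case split on the offset $\delta_P$ between the assigned ranks of $P$ and the ranks of the elements actually stored in $P$: if $\delta_P\le 2$ the window argument works; if $\delta_P\ge 3$ then \emph{every} stored element has error $\Omega(|P|\delta_P)$, and a different union bound (summed over all $\delta_P$) gives the same $O(\rho^2/|P|)$ probability. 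Your proposal is missing this second case entirely, and without it the implication ``$|P_r|\ge 2^h\Rightarrow$ some element with rank in a fixed $O(2^h)$-window has error $\ge 2^{h-1}$'' simply does not hold.

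\textbf{The extra $\log$ in the head is a self-inflicted wound.} You bounded $\Pr[|P_r|=2^h]\le \Pr[|P_r|\ge 2^h]$ uniformly before splitting; that is what creates the $\log\rho$ factor. Instead, split the expectation directly: for $2^h\le \rho^2$ keep $\Pr[|P_r|=2^h]$ and use $\sum_h \Pr[|P_r|=2^h]\le 1$ together with $C(2^h)\le O(C(\rho^2))$; for $2^h>\rho^2$ use the tail bound $\Pr[|P_r|\ge 2^h]=O(\rho^2/2^h)$ and admissibility~(3). This gives $\E[C(|P_r|)]=O(C(\rho^2))$ with no logarithmic loss---equivalently, it is the per-rank version of the paper's split into ``small'' ($|P|\le\rho^2$) and ``large'' \PMAs. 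Also note your split point $h^\star$ with $2^{h^\star}=\Theta(\rho)$ is inconsistent with later invoking admissibility~(3) at $2^j=\Theta(\rho^2)$; the correct threshold throughout is $\rho^2$, not $\rho$.
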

\begin{proof}
 In this proof, to simplify the notation we define a variable $\rho = |\mu| + s^2$.  Thus, we want to show a total cost of $O(nC(\rho^2))$.
 
Let $\calP_f$ be the set of actual \PMAs after all inserts have completed.  By Lemma~\ref{lem:finalpmas}, the total number of element movements is 
\[
O\left(\sum_{P\in \calP_f} |P|\cdot C(|P|) \right).
\]

 We split $\calP_f$ into two sets: \PMAs with at most $|P| \leq \rho^2$ assigned ranks (which we call \defn{small \PMAs}), and \PMAs with more than $|P| > \rho^2$ assigned ranks (\defn{large \PMAs}).  That is to say, the number of element movements is:
\[
O\left(
\sum_{\text{small} P\in \calP_f} |P|\cdot C(|P|)
+ 
\sum_{\text{large} P\in \calP_f} |P|\cdot C(|P|)
\right).
\]
For small \PMAs, we can upper bound 

\[
\sum_{\text{small} P\in \calP_f} |P| \cdot C(|P|) = O(n C(\rho^2)).
\]

Now consider large potential \PMAs.
We calculate the probability that any large $P$ has $P\in \calP_F$ by upper bounding the probability that $P$ becomes an actual \PMA; then, we will use this probability to upper bound the expected cost.

If $P\in \calP_f$, then the ranks of the elements stored in $P$ must be a contiguous subset of $\{1, \ldots, n\}$ of size at most $3|P|$.  For $j = \{0, 1, \ldots, \lceil (n -6|P|)/3|P|\rceil\}$, we create a set $R_j = \{3|P|j + 1, \ldots, 3|P|j + 6|P|\}$. 
(Thus, each $R_j$ is a set of size $6|P|$ with its first $3|P|$ elements in common with $R_{j-1}$.)
The ranks of the elements in $P$ must be a subset of $R_{\tau_P}$ for some $\tau_P$.  
There are $|P|$ ranks assigned to $P$; these are also a contiguous subset of $\{1,\ldots, n\}$, so they must be a subset of $R_{\alpha_P}$ for some $\alpha_P$.\footnote{There may be multiple such $\alpha_P$; pick one arbitrarily.}
Let $\delta_P = |\alpha_P - \tau_P|$.

We make two observations; both observations state that for $P$ to exist there must be a high-error element. 
From Lemma~\ref{lem:higherrorelement}, if $P$ is an actual \PMA then there must be an element $x_j$ stored in $P$ with error $\eta_j \geq |P|/2$.  Furthermore, all elements in $P$ have error at least $3|P|(\delta_P - 2)$ (this second observation is important for the case when all items in $P$ have large error).

We split into two cases.  
\paragraph{First case: $\delta_P \geq 3$.} 
In this case,  we use that all items in $P$ must have error at least $3|P|(\delta_P - 2)$.  Let 
$t$ satisfy $|\alpha_P - t| = \delta_P$ (there are two such $t$).
Let $x_j$ be an item with rank in $R_{t}$; by Lemma~\ref{lem:cheby}, the probability that $x_j$ has error at least $3|P|(\delta_P-2)$ is:
 \[
\Pr\left[\eta_j \geq 3|P|(\delta_P - 2)\right] \leq \frac{1}{(3|P|(\delta_P - 2)/\rho)^2}
\]
There are $6|P|$ elements with ranks in $R_{t}$; taking a union bound, the probability that any element has error at least $3|P|(\delta_P-2)$ is at most 
\[
\Pr\left[\exists x_j\in R_{t} \text{ such that } \eta_j \geq 3|P|(\delta_P - 2)\right] \leq
\frac{6|P|\rho^2}{9|P|^2(\delta_P - 2)^2}
= \frac{2\rho^2}{3|P|(\delta_P - 2)^2}.
\]
Taking the union bound over all $\delta_P$, and over both possible values of $t$ for a given $\delta_P$, we obtain the following.  Given that $\delta_P \geq 3$, the probability that $P$ is an actual \PMA is at most (using that $\sum_{x=1}^{\infty} 1/x^2 < 2$):
\[
\sum_{\delta_P = 3}^{n} 
 \frac{4\rho^2}{3|P|(\delta_P - 2)^2}.
< 4\rho^2/|P|.
\]

\paragraph{Second case: $\delta_P \leq 2$.}
By Lemma~\ref{lem:higherrorelement}, if $P\in \calP_f$ there must be some element $x_j$ stored in $P$ with $\eta_j \geq |P|/2$. 
For a given $x_j$, by Lemma~\ref{lem:cheby} we have
\[
\Pr\left[\eta_j \geq |P|/2\right] \leq \frac{4\rho^2}{|P|^2}.
\]

Since $\delta_P \leq 2$, the rank of $x_j$ must be in 
$R_{\tau_P - 2} \cup R_{\tau_P - 1} \cup \ldots \cup R_{\tau_P + 2}$;
there are at most $24|P|$ values in this set.\footnote{We do not attempt to optimize constants in this proof.}
Taking the union bound over all such $x_j$ the probability that $P\in \calP_F$ is at most
 \[
\Pr\left[\exists x_j \text{ such that }\eta_j \geq |P|/2\right] \leq 24|P| 
\cdot\frac{4\rho^2}{|P|^2}.
\leq \frac{96\rho^2}{|P|}.
\]

\paragraph{Putting it all together.} Taking the union bound over the two cases above, the probability that a potential \PMA $P$ 
is an actual \PMA is at most $100\rho^2/|P|$.

Now, we sum the expected cost over all potential \PMAs with $|P| > \rho^2$ assigned ranks.  There are $n/2^{i}$ potential \PMAs with $|P| = 2^i$; if any is an actual \PMA, it has cost $O( 2^{i}C(2^i))$.  From the above, the probability that each is an actual \PMA is $100\rho^2/|P| = 100\rho^2/2^i$.

Therefore, the expected cost of all \PMAs is
\begin{align*}
\sum_{i = 2\lceil \log_2 \rho \rceil}^{\log n} O\left(\frac{n}{2^{i}} \cdot 2^{i}C(2^{i})  \cdot \frac{100\rho^2}{2^i} \right) &=
O\left(n\rho^2 \sum_{i = 2\lceil \log_2 \rho \rceil}^{\log n} C(2^i)/2^i \right) \\
&\leq O\left(n\frac{\rho^2}{2^{2\lceil \log_2 \rho \rceil}} \sum_{j = 0}^{\infty} C(2^{2\lceil \log_2 \rho \rceil + j})/2^j \right) \\
&\leq O\left(nC(\rho^2)\right).
\end{align*}
where the last line is because $C(\cdot)$ is admissible.  Summing the expected cost of $P\in \calP_F$ with height at least $2h_0$ and the cost of $P\in \calP_F$ with height at most $2h_0$ we obtain the lemma.
\end{proof}

\section{Experiments}\label{sec:experiments}

This section presents experimental results on real datasets. The goal is to show the theory is predictive of practice. In particular, the aim is to establish the following. 

\begin{compactitem}
    \item The \LearnedLL improves performance over baseline data structures. Moreover, for the common use case of temporal data, learning from the past leads to future improvements. 
    \item Predictions made on only a small amount of past data  lead to improved performance. 
    \item The \LearnedLL is robust to large errors in the predictions.  
\end{compactitem}
\paragraph{Experimental Setup.} 
We compare the performance of two strong baselines along with our algorithm.  The first baseline is the Packed-Memory Array (PMA)~\cite{BenderDeFa00} and the second is the Adaptive Packed-Memory Array (APMA)~\cite{BenderHu07}. We tested LearnedLLA with both of these \PMAs as the black box \PMA.
For fairness, we seek to ensure all algorithms have the same memory; therefore,  we  implemented the PMA (resp. APMA) as a LearnedLLA using a PMA (resp. APMA) as a black box, such that each element is always placed into the first black box LLA.

We use several datasets from SNAP Large Network Dataset Collection~\cite{snapnets}. All the datasets we use are temporal. The timestamp is used for the arrival order, and an element feature is used for the arriving element's value. 
See Appendix \ref{sec:appendix-experiments} for detailed dataset descriptions.

To generate the predictions for LearnedLLA, we use a contiguous subsequence $L_{\text{train}}$ of the input in temporal order as our training data. Our test data $L_{\text{test}}$ is a contiguous subsequence of the input that comes right after $L_{\text{train}}$, again in temporal order. 

We use two different algorithms for obtaining predictions:

\begin{itemize}
    \item $\text{predictor}_1(L_{\text{train}}, L_{\text{test}})$: For each element $x \in L_{\text{test}}$, this function first finds the rank of $x$ in $L_{\text{train}}$, and then it scales it by $|L_{\text{test}}|/|L_{\text{train}}|$. Finally, it returns these predictions. 
    \item $\text{predictor}_2(L_{\text{train}}, L_{\text{test}})$: Let $a$ be the slope of the best-fit line for points $\{(i,L_{\text{train}}[i])\}_{1 \leq i \leq |L_{\text{train}}|}$. First, this function adds $a \cdot (d + i \cdot (\frac{|L_{\text{test}}|}{|L_{\text{train}}|}-1))$ to the $i$'th element in $L_{\text{train}}$, for each $1\leq i \leq |L_{\text{train}}|$, to obtain $L'_{\text{train}}$, where $d$ is the difference between the starting points of training and test data in our input sequence. Then it returns $\text{predictor}_1(L'_{\text{train}}, L_{\text{test}})$.
\end{itemize}
Let $L^1_{\text{train}}$ and $L^2_{\text{train}}$ be the first and second halves of $L_{\text{train}}$, respectively. To obtain the final predictions for LearnedLLA, we calculate the predictions $P_i:=\text{predictor}_i(L^1_{\text{train}}, L^2_{\text{train}})$ for $i=1,2$, and run LearnedLLA on $L^2_{\text{train}}$ with both of these predictions. If $P_{i^*}$ performs better, we use  $\text{predictor}_{i^*}(L_{\text{train}}, L_{\text{test}})$ as the final predictions for $L_{\text{test}}$.

Our implementation and datasets can be found at \url{https://github.com/AidinNiaparast/LearnedLLA}. See Appendix~\ref{sec:appendix-experiments} for more information about the experimental setup.

\paragraph{Experimental Results.} See Table~\ref{table:exp} for performance on several real datasets.   
We show plots on one dataset (Appendix~\ref{sec:appendix-experiments} contains the plots for the other datasets in Table~\ref{table:exp}). This dataset is Gowalla~\cite{cho2011friendship}, a location-based social networking website where users share their locations by checking in. The latitudes of the locations are used as elements in the input sequence.

Figure~\ref{fig:GowallaLatitude(a)} shows the amortized cost versus the  test data size. For several values for $k$, we use the first and second $n=2^k$ portions of the input as training data and test data, respectively.
Figure~\ref{fig:GowallaLatitude(b)} shows the performance versus the training data size to illustrate how long it takes for the LearnedLLA to learn. The x-axis is the ratio of the training data size to the test data size (in percentage).
Figure~\ref{fig:GowallaLatitude(c)} is a robustness experiment showing performance versus the noise added to predictions, using half of the data as training. In this experiment, we first generate predictions by the algorithm described above, and then we sample $t$ percent of the predictions uniformly at random and make their error as large as possible (the predicted rank is modified to 1 or $n$, whichever is farthest from the current calculated rank). We repeat the experiment five times, each time resampling the dataset, and report the mean and standard deviation of the results of these experiments.

\captionsetup[table]{skip=10pt}

\begin{table}[ht!]
\centering
 \begin{tabular}{c|c c c c c} 
 \textbf{} & \vtop{\hbox{\strut \ \ \ Gowalla}\hbox{\strut \ \ (Latitude)}} & \vtop{\hbox{\strut \ \ \ Gowalla}\hbox{\strut (LocationID)}} 
 & MOOC 
 & AskUbuntu & email-Eu-core
 \\ [0.5ex] 
 \hline
 \rule{0pt}{2.5ex}
 PMA & 6.47 & 6.96 & 19.22 & 24.62 & 21.48\\
 APMA & 6.93 & 5.64 & 16.70 & 11.34 & 21.44\\
LearnedLLA + PMA & {\bf 4.32} & 5.99 & {\bf 11.99} & 14.29 & {\bf 16.48}\\
LearnedLLA + APMA & 4.39 & {\bf 5.01} & 12.13 & {\bf 8.53} & 16.49\\
 \end{tabular}
\caption{Amortized cost of \PMAs on several real datasets. Each column corresponds to a dataset (see Appendix~\ref{sec:appendix-experiments} for dataset descriptions). Rows represent the amortized cost of PMA, APMA, and LearnedLLA using a PMA and APMA as a black box, respectively. 
In all cases, we use the first and second $2^{17}=131072$ portions of the dataset as training and test data, respectively.}
\label{table:exp}
\end{table}

\begin{figure}[ht!]
\newcommand\width{4.4cm}
\newcommand\height{2.5cm}
\newcommand\yaxissub{-10pt}
\centering
\begin{subfigure}{0.33\textwidth}
  \centering
  \hspace{\yaxissub}
  \includegraphics[width=\width, height=\height]{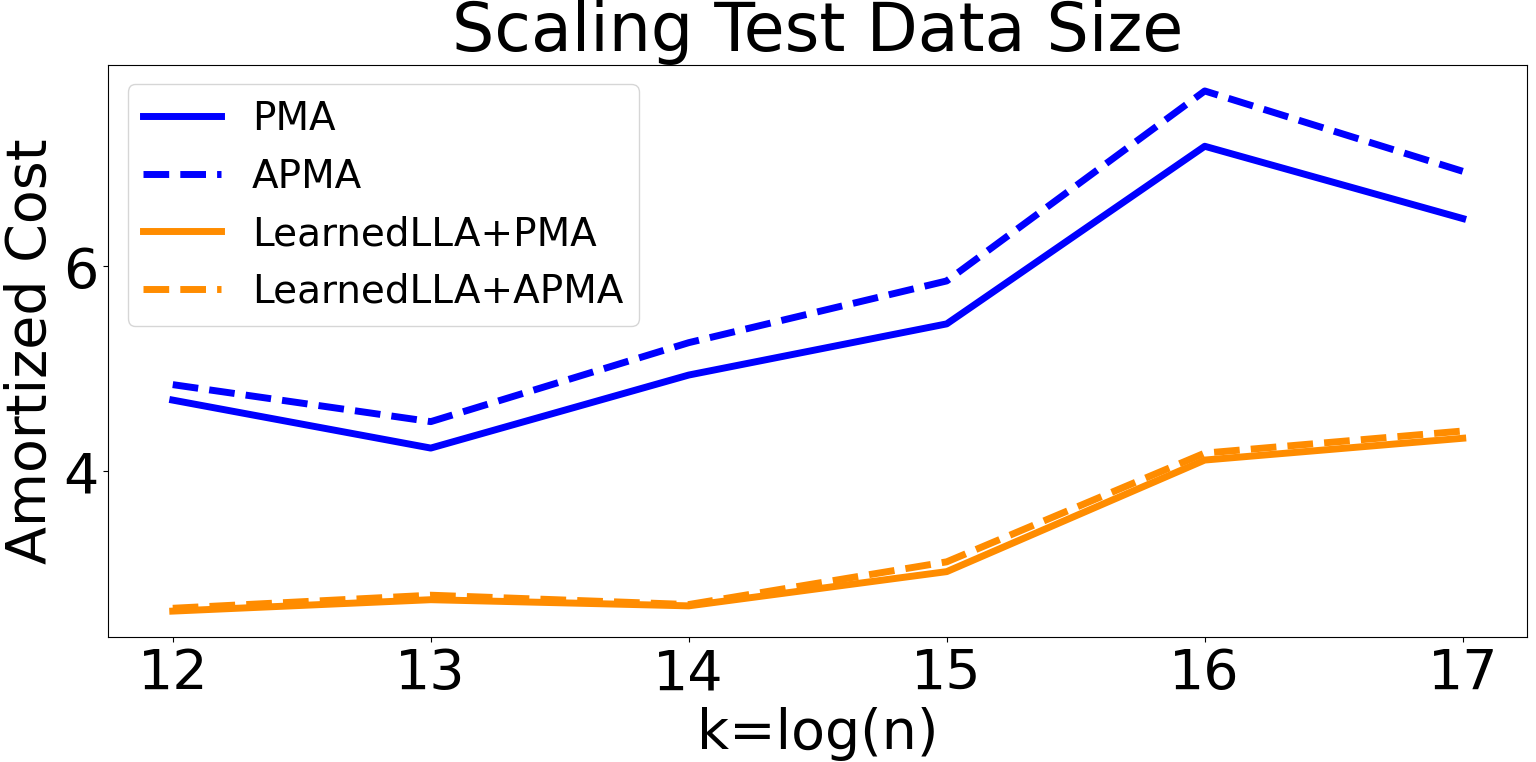}
  \caption{}
  \label{fig:GowallaLatitude(a)}
\end{subfigure}
\begin{subfigure}{0.33\textwidth}
  \centering
  \hspace{\yaxissub}
  \includegraphics[width=\width, height=\height]{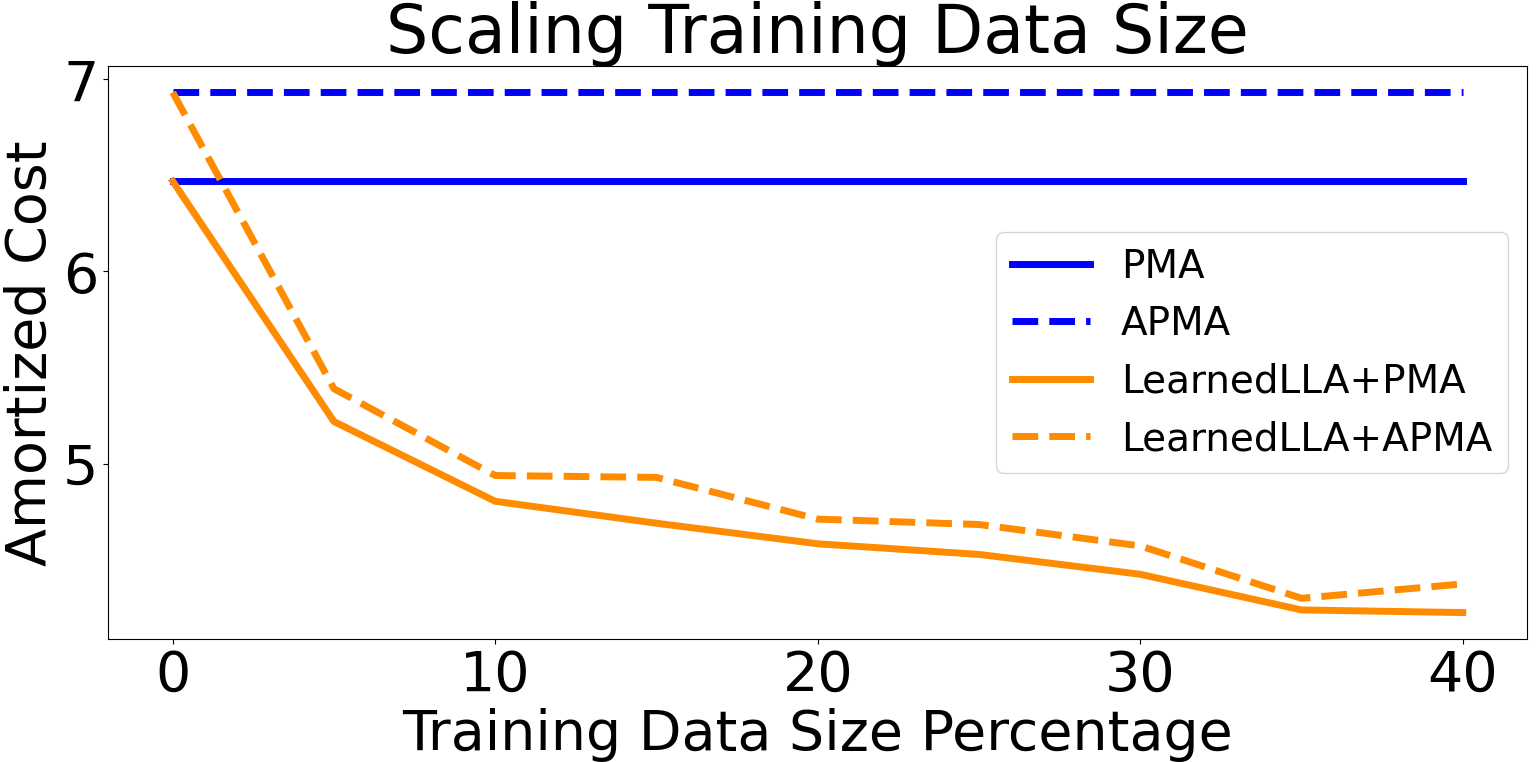}
  \caption{}
  \label{fig:GowallaLatitude(b)}
\end{subfigure}
\begin{subfigure}{0.3\textwidth}
  \centering
  \hspace{\yaxissub}
  \includegraphics[width=\width, height=\height]{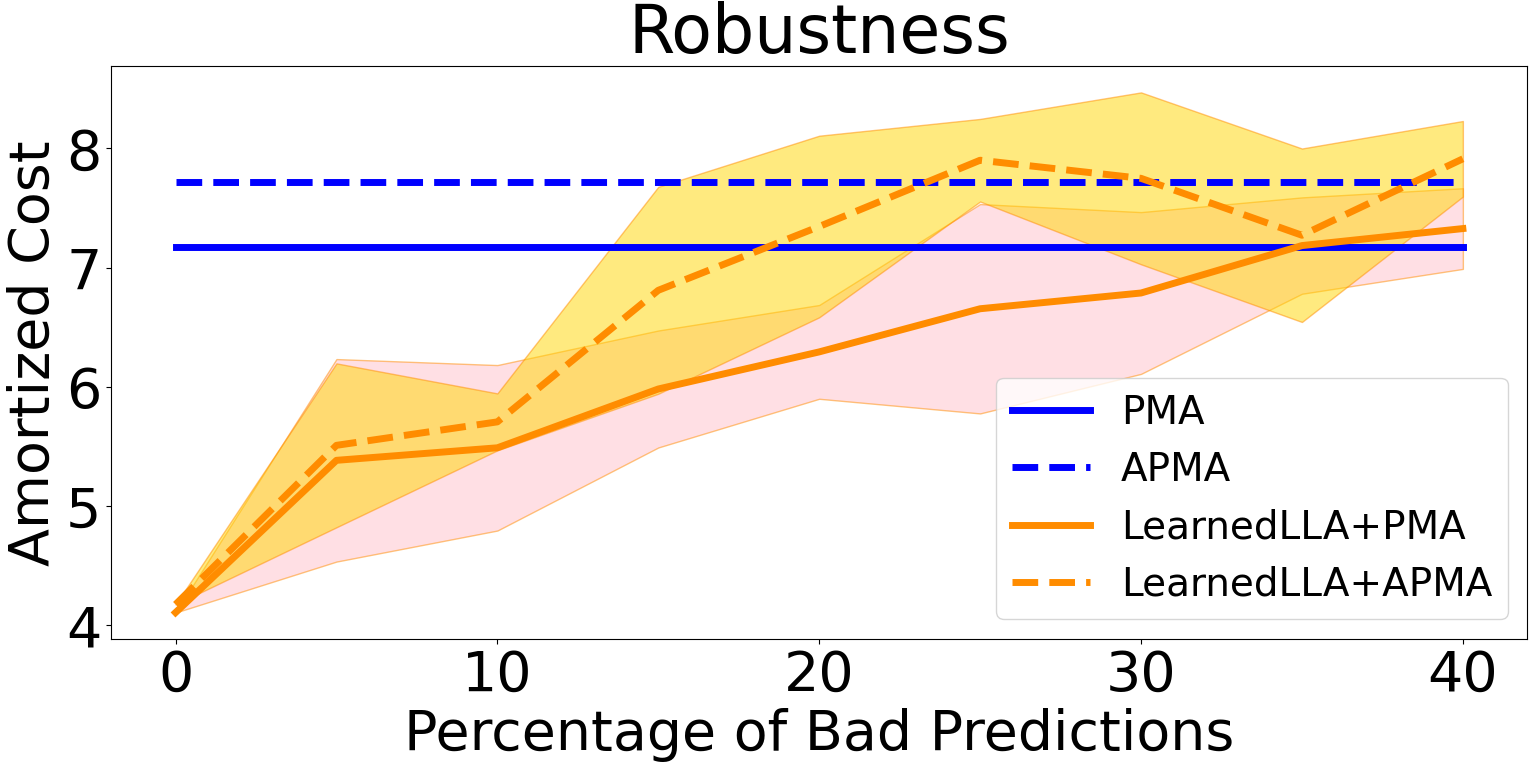}
  \caption{}
  \label{fig:GowallaLatitude(c)}
\end{subfigure}
\caption{Latitudes in the Gowalla dataset. In Fig.~\ref{fig:GowallaLatitude(a)}, we use the first and second $n=2^k$ portions of the input as training data and test data (resp.) for several values of $k$. In Fig.~\ref{fig:GowallaLatitude(b)}, the test data is a fixed portion of the input of size $n=2^{17}=131072$, and we increase the size of the training data, which always comes right before the test data. In Fig.~\ref{fig:GowallaLatitude(c)}, training data and test data are the first and second $2^{16}$ portions of the input (resp.) and we increase the percentage of ``bad'' predictions. For both LearnedLLA algorithms in this plot, we resample the bad predictions 5 times. The lines are the means of these experiments, and the clouds around them show the standard deviation.}
\label{fig:GowallaLatitude}
\end{figure}

\paragraph{Discussion.} Results  in Table~\ref{table:exp}  show that in most cases, using even a simple algorithm to  predict ranks can lead to significant improvements in the performance over the baselines;  in some cases by over 40\%.
In all cases, the \learnedLL improves upon the performance of the black-box LLA used. 
 Furthermore, as illustrated in Figure~\ref{fig:GowallaLatitude(b)},  a small amount of past data---in some cases as small as 5\%---is needed to see a significant separation
between the performance of our method and the baseline LLAs. Finally, Figure~\ref{fig:GowallaLatitude(c)}  suggests that our algorithm is robust to bad predictions. In particular, in this experiment, the maximum error is as large as possible and a significant fraction of the input have errors in their ranks, yet the LearnedLLA is still able to improve over baseline performance.  We remark that when an enormous number of predictions are completely erroneous, the method can have performance worse than baselines.  

We describe more experiments in Appendix~\ref{sec:appendix-experiments}; these experiments further support our conclusions.
\else
 \input{learnedpma}
 \input{experiments}
\fi
\section{Conclusion}\label{sec:conclusion}
In this paper, we show how to use learned predictions to build a learned list labeling array.  We analyze our data structure using the learning-augmented algorithms model. This is the first application of the model to bound the theoretical performance of a data structure. We show that the new data structure  optimally makes use of the predictions. Moreover, our experiments establish that the theory is predictive of practical performance on real data.

An exciting line of work is to determine what other data structures can have improved theoretical performance using predictions.  A feature of the list labeling problem that makes it amenable to the learning-augmented algorithms model is that its cost function and online nature is similar to the competitive analysis model, where predictions have been applied successfully to many problems.  Other data structure problems with similar structure are natural candidates to consider. 

\begin{ack}
Samuel McCauley was supported in part by NSF CCF 2103813.  
Benjamin Moseley was supported in part by a Google Research Award, an Infor Research Award, a Carnegie Bosch Junior Faculty Chair, National Science Foundation grants CCF-2121744 and CCF-1845146 and U. S. Office of Naval Research grant N00014-22-1-2702. Aidin Niaparast was supported in part by U. S. Office of Naval Research under award number N00014-21-1-2243 and the Air Force Office of Scientific Research under award number FA9550-20-1-0080. Shikha Singh was supported in part by NSF CCF 1947789.
\end{ack}

%\shikha{Important: This year the 7 checklist will be submitted separately from the main paper in OpenReview, please review it well 8 ahead of the submission deadline: https://neurips.cc/public/guides/PaperChecklist.}
 %\section*{References}
%\medskip

\bibliographystyle{plain}

{\small
%References follow the acknowledgments. Use unnumbered first-level heading for
%the references. Any choice of citation style is acceptable as long as you are
%consistent. It is permissible to reduce the font %size to \verb+small+ (9 point)
%when listing the references.
%Note that the Reference section does not count towards the page limit.
\bibliography{pma}
}
\pagebreak
\iffull
\appendix

\section{Appendix}
\label{sec:appendix}

\subsection{Background:  Packed-Memory Arrays}

This section provides background on a common list labeling data structure, \defn{the packed-memory array (PMA)}, introduced by Bender et al.~\cite{BenderDeFa00}. 
The data structure in~\cite{BenderDeFa00} extends the LLAs of~\cite{ItaiKoRo81, Willard82Maintaining, Willard86Good} by adding lower-density thresholds.  Lower-density thresholds provide the added guarantee that any two elements in the array are $\Theta(1)$ slots apart.  
Both versions guarantee an amortized cost of $O(\log^ 2 n)$.  
For simplicity, we describe the PMA using only upper density thresholds.  

\paragraph{Classic PMA.}  Consider an array with $m=cn$ slots, for a constant $c>1$.  Divide the array into ranges of size $\Theta(\log n)$.  These ranges form the \defn{leaves} of an implicit binary tree on top of the $\Theta(n/\log n)$ leaves. Each node in this implicit binary tree corresponds to a subarray containing all leaf ranges within its subtree.

For a node in the implicit binary tree, define the \defn{size($u$)} as number of slots in its subarray and \defn{density($u$)} as the number of elements in its subarray divided by size($u$).  Let the \defn{depth} of the root node be $0$, nodes and leaf nodes be at depth $d = \Theta(\log n/\log \log n)$.  
  
For each node $u$ at depth $k$, let $\tau_k$ be the {\bf density threshold}.  Let $\tau_0$ and $\tau_d$ be constants such that $0 < \tau_0 < \tau_d < 1$. For a node at depth $k$, let $\tau_k = \tau_0 + (\tau_d - \tau_0) \cdot \frac kd$.

A node $u$ is \defn{within threshold} if $\text{density}(u) \leq \tau_k$.  If a node is within threshold, then all its descendants are also within threshold.  

To insert an element $x$, determine the leaf range $r$ it belongs to.  If the leaf range $r$ is within threshold, there is always a slot for $x$.  Insert $x$ in its slot and rebalance the $r$ by evenly distributing all elements.  If the leaf range is out of threshold, find the first ancestor $r_a$ of the $r$ that is within threshold and rebalance all elements evenly in that range.  Now $r$ is within threshold and there is room to insert $x$. 

To see why the amortized cost of insertion is $O(\log^2 n)$, consider the insertion of element $x$ that causes a node $u$ at depth $k$ to be rebalanced.  Then, a child $v$ of $u$ must be out of threshold, that is, $\text{density}(v) > \tau_{k+1}$.  After we rebalance $u$, both $v$ (and its sibling) have density at most $\tau_k$ (density threshold of its parent $u$).  The node $u$ would need to rebalanced again when either $v$ (or its sibling) go out of treshold again.  This requires at least $(\tau_{k+1} - \tau_k) \cdot \text{size}(v)$ additional insertions.  A rebalance of node $u$ costs $\text{size}(u)$.
Thus, the amortized cost of rebalancing $u$ is:
\begin{align*}
\frac{\text{size}(u)}{(\tau_{k+1} - \tau_k) \text{size}(v)} = \frac{2}{\tau_{k+1} - \tau_k} = \frac{2d}{ \tau_d - \tau_0} =O(\log n)
\end{align*}

Since an insertion can contribute to $O(\log n)$ ancestors being out of balance, the overall amortized cost of an insertion is $O(\log^2 n)$.

\subsection{Additional Experiments}\label{sec:appendix-experiments}

In this section, we further describe the experimental setup and the datasets we use. We also present more experimental results.

\paragraph{Experimental setup.} We use a machine with 11th Gen Intel Core i7 CPU 2.80GHz, 32GB of RAM, 128GB NVMe KIOXIA disk drive, and running 64-bit Windows 10 Enterprise to run our experiments. We remark that amortized cost,  the average number of element movements,  is hardware-independent. 
We use the following density thresholds for the PMA and APMA: root's lower threshold: 0.2,
leaves' lower threshold: 0.1,
root's upper threshold: 0.5,
leaves' upper threshold: 0.9.
We add a $-\infty$ element at the beginning of each experiment. This is to make sure the internal predictor data structure in APMA operates as expected from the beginning of the experiment (see~\cite{BenderHu07}). 
The datasets we use might include duplicate elements, and we use the same algorithm to insert these elements, even though in Section~\ref{sec:prelim}, we assume the elements in the input sequence form a set. This does not affect any of the algorithms and they are still well-defined. The relative order between  duplicates can be arbitrary. 
In LearnedLLA, when we insert an element $x$, to find the black box \PMAs containing the predecessor and successor of $x$, we use Python Sorted Containers library\footnote{https://grantjenks.com/docs/sortedcontainers/} (note that since we only measure the amortized cost, our results are independent of the function used to find these \PMAs).
For measuring the amortized cost in the experiments, we do not count the first assignment of a label to an element as a relabel (note that this is in contrast to the theory section of the paper).

\captionsetup[sub]{font=small}
\begin{figure}[ht]
\ifabstract
\vspace{-.05in}
\fi
\newcommand\width{4.4cm}
\newcommand\height{2.5cm}
\newcommand\yaxissub{-10pt}
\centering
\begin{subfigure}{0.33\textwidth}
  \centering
  \hspace{\yaxissub}
  \includegraphics[width=\width, height=\height]{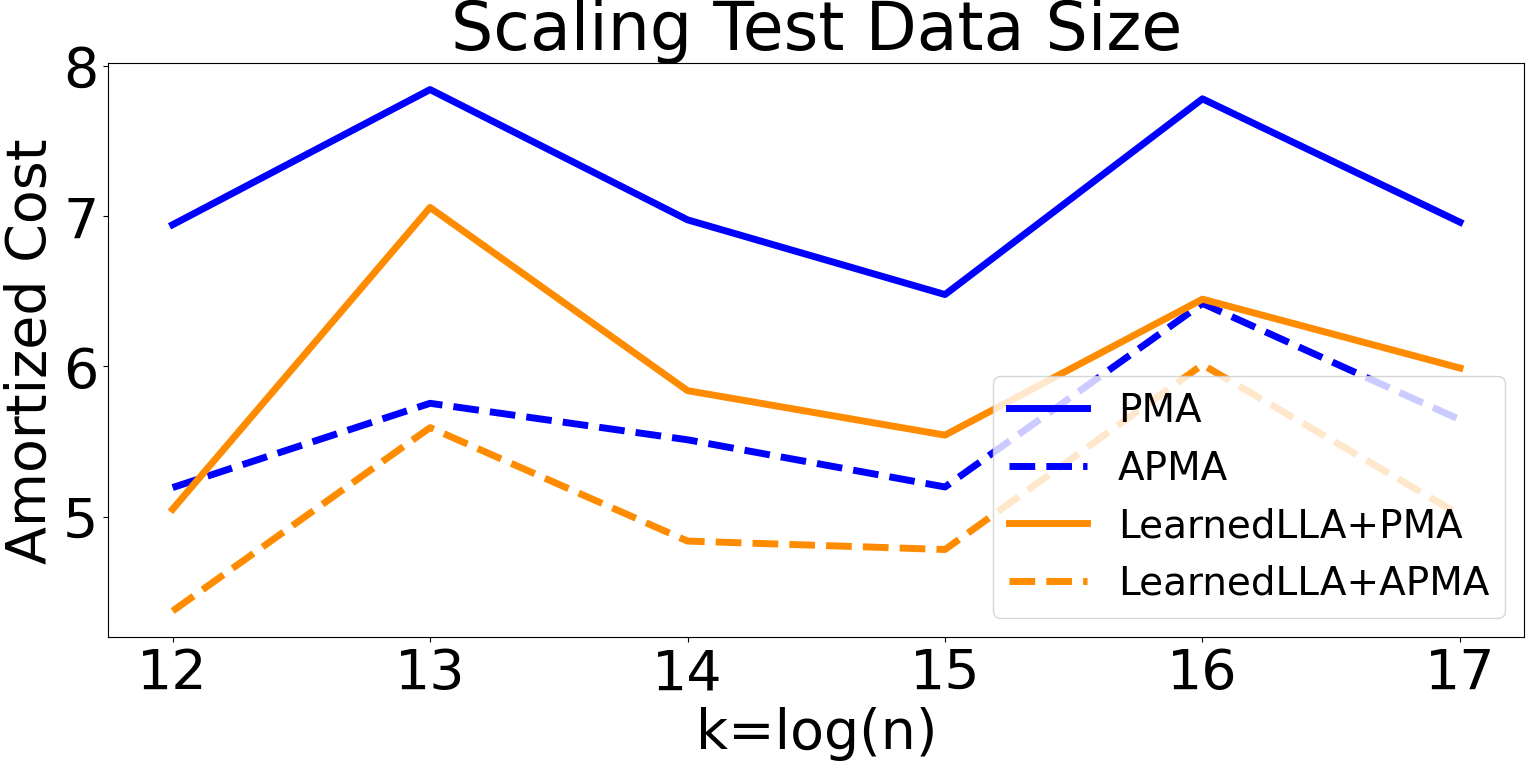}
  \caption{}
  \label{fig:GowallaLocationID(a)}
\end{subfigure}
\begin{subfigure}{0.33\textwidth}
  \centering
  \hspace{\yaxissub}
  \includegraphics[width=\width, height=\height]{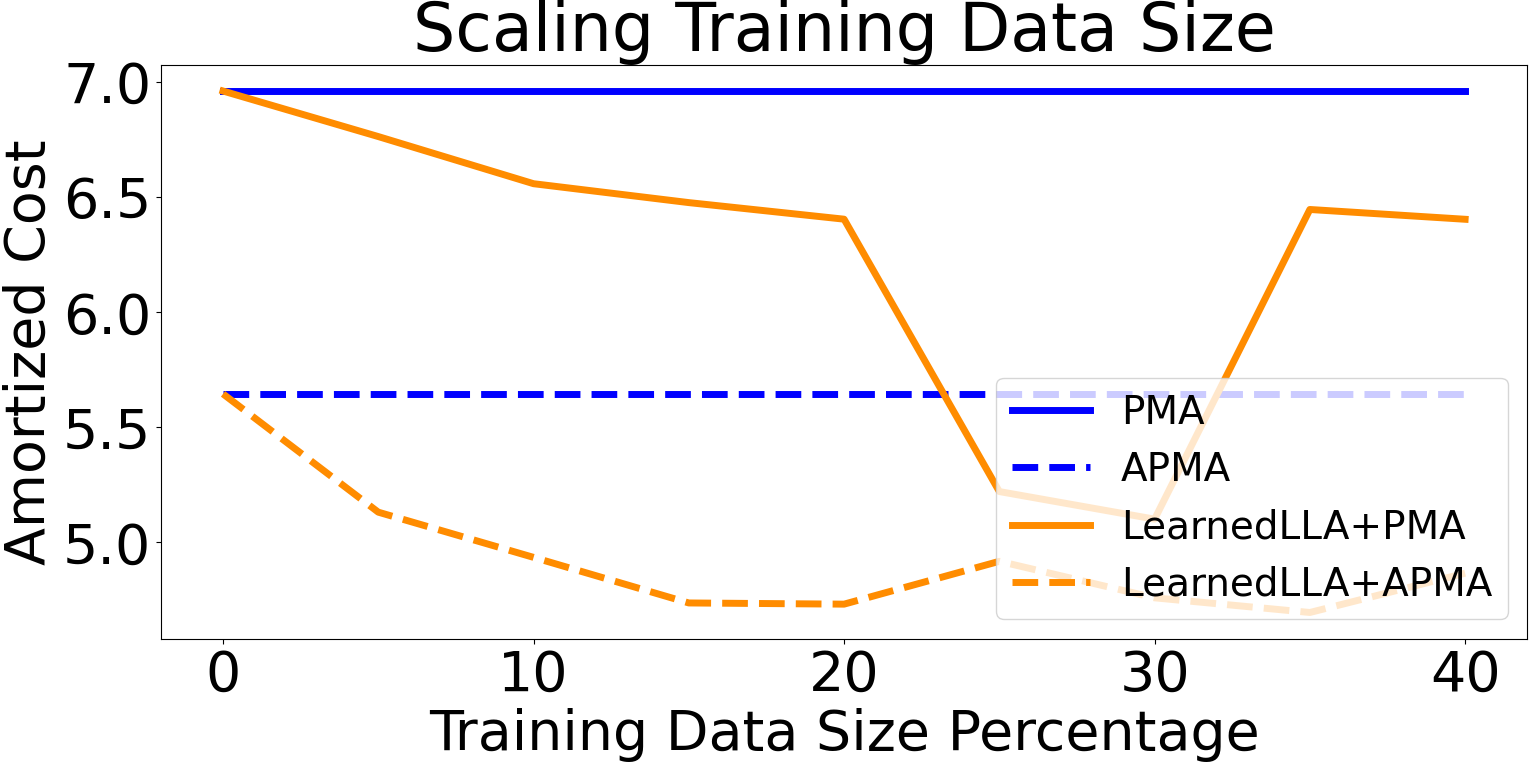}
  \caption{}
  \label{fig:GowallaLocationID(b)}
\end{subfigure}
\begin{subfigure}{0.3\textwidth}
  \centering
  \hspace{\yaxissub}
  \includegraphics[width=\width, height=\height]{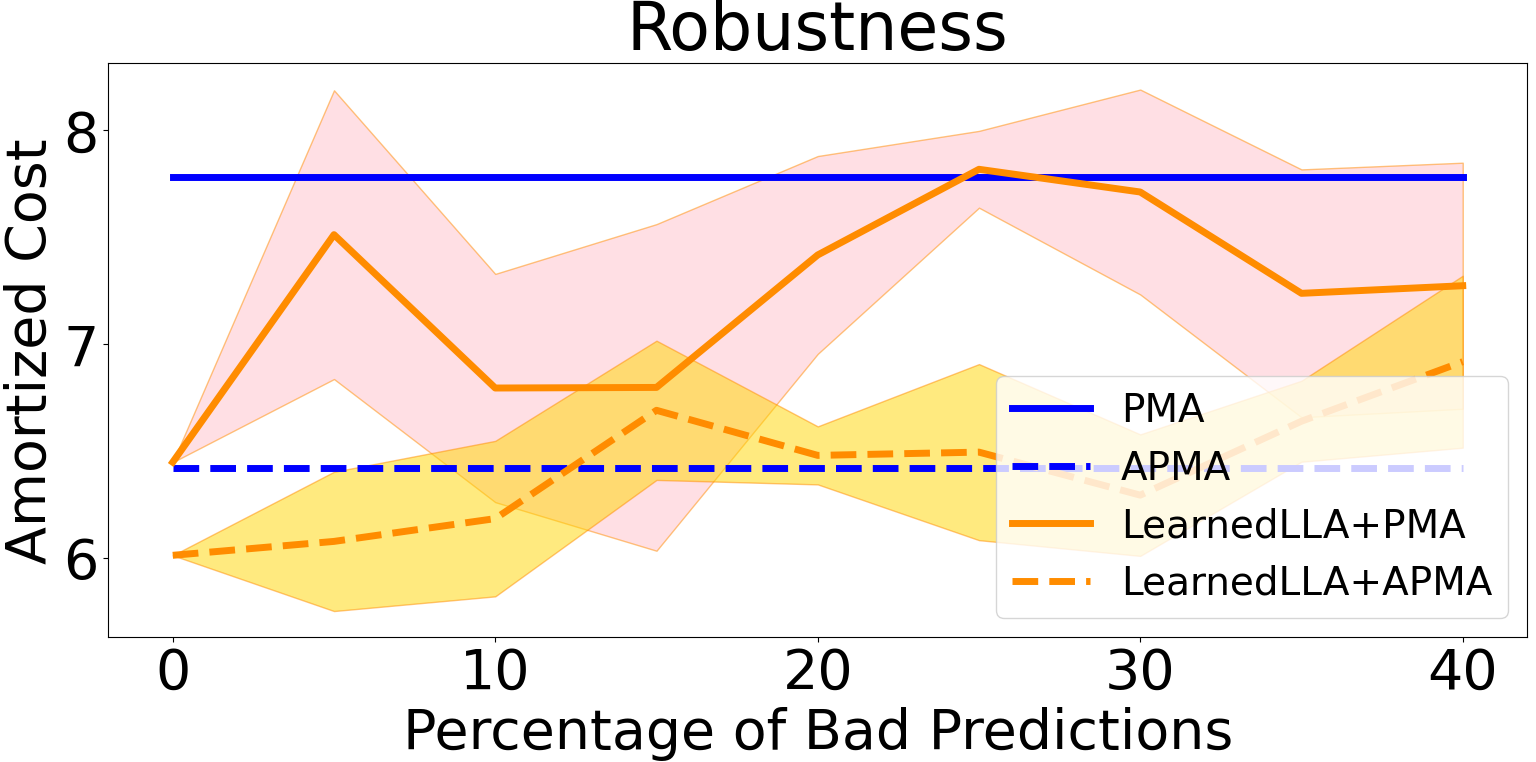}
  \caption{}
  \label{fig:GowallaLocationID(c)}
\end{subfigure}
\caption{Gowalla (LocationID)}
\label{fig:GowallaLocationID}
\end{figure}

\begin{figure}[ht]
\ifabstract
\vspace{-.05in}
\fi
\newcommand\width{4.4cm}
\newcommand\height{2.5cm}
\newcommand\yaxissub{-10pt}
\centering
\begin{subfigure}{0.33\textwidth}
  \centering
  \hspace{\yaxissub}
  \includegraphics[width=\width, height=\height]{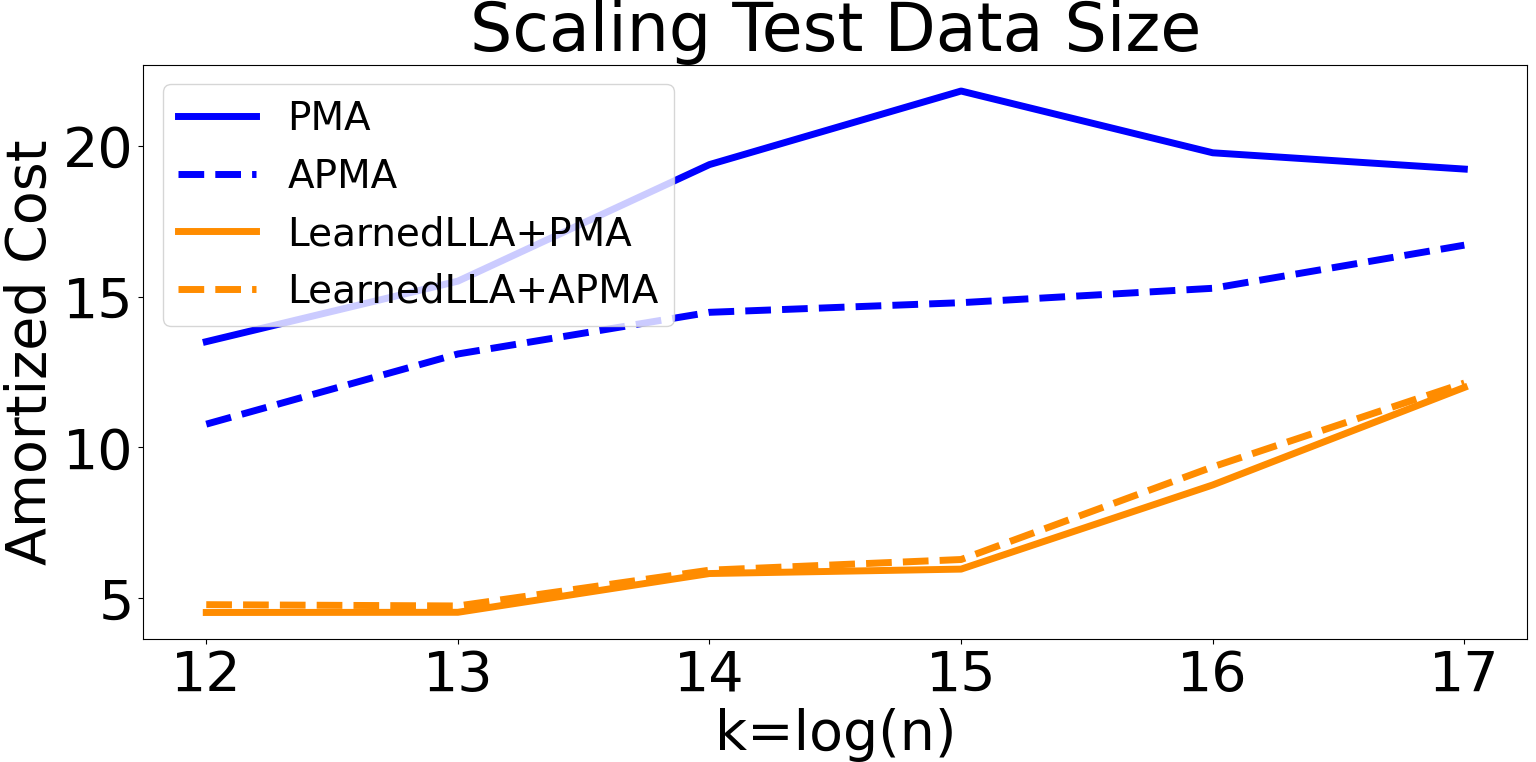}
  \caption{}
  \label{fig:MOOC(a)}
\end{subfigure}
\begin{subfigure}{0.33\textwidth}
  \centering
  \hspace{\yaxissub}
  \includegraphics[width=\width, height=\height]{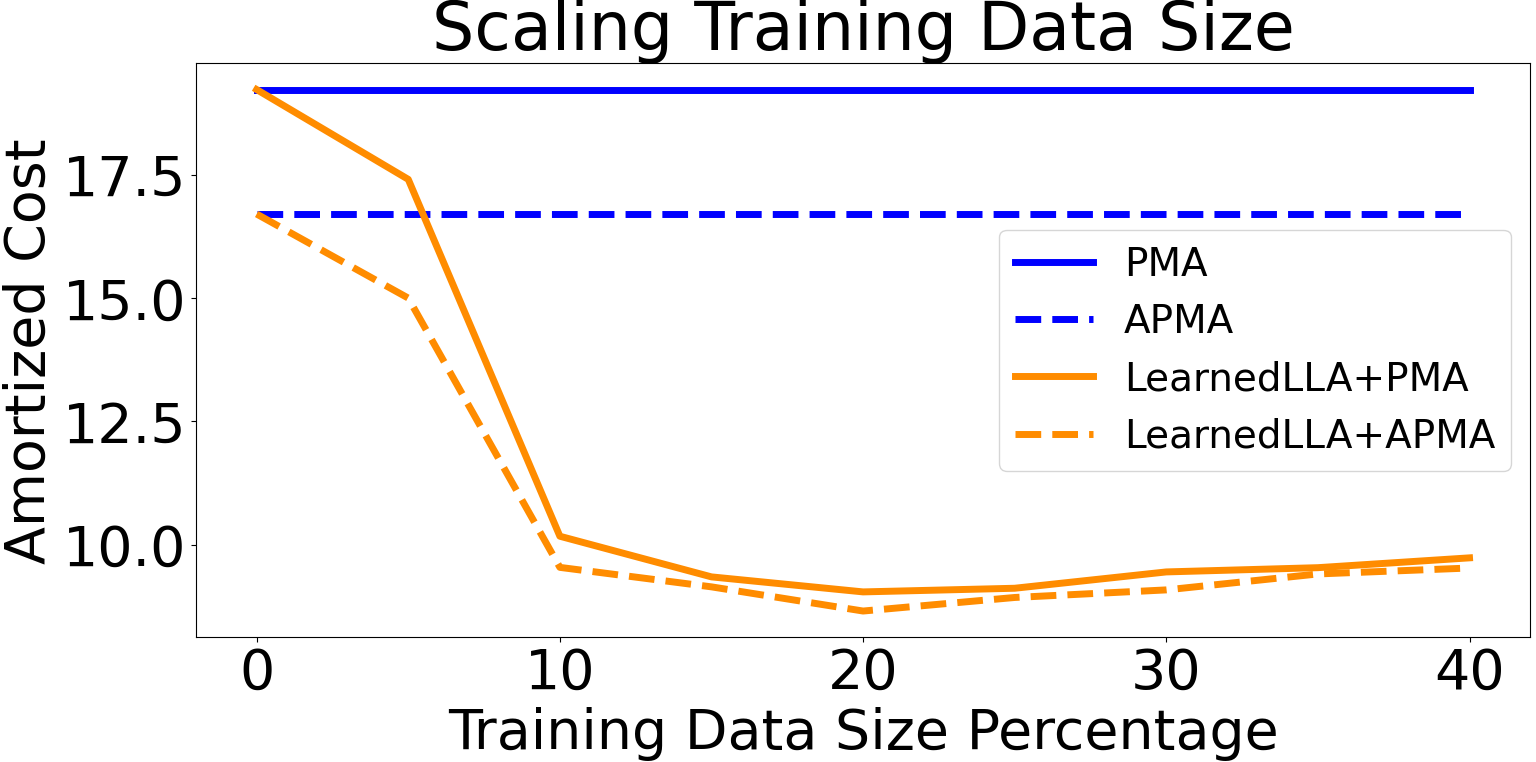}
  \caption{}
  \label{fig:MOOC(b)}
\end{subfigure}
\begin{subfigure}{0.3\textwidth}
  \centering
  \hspace{\yaxissub}
  \includegraphics[width=\width, height=\height]{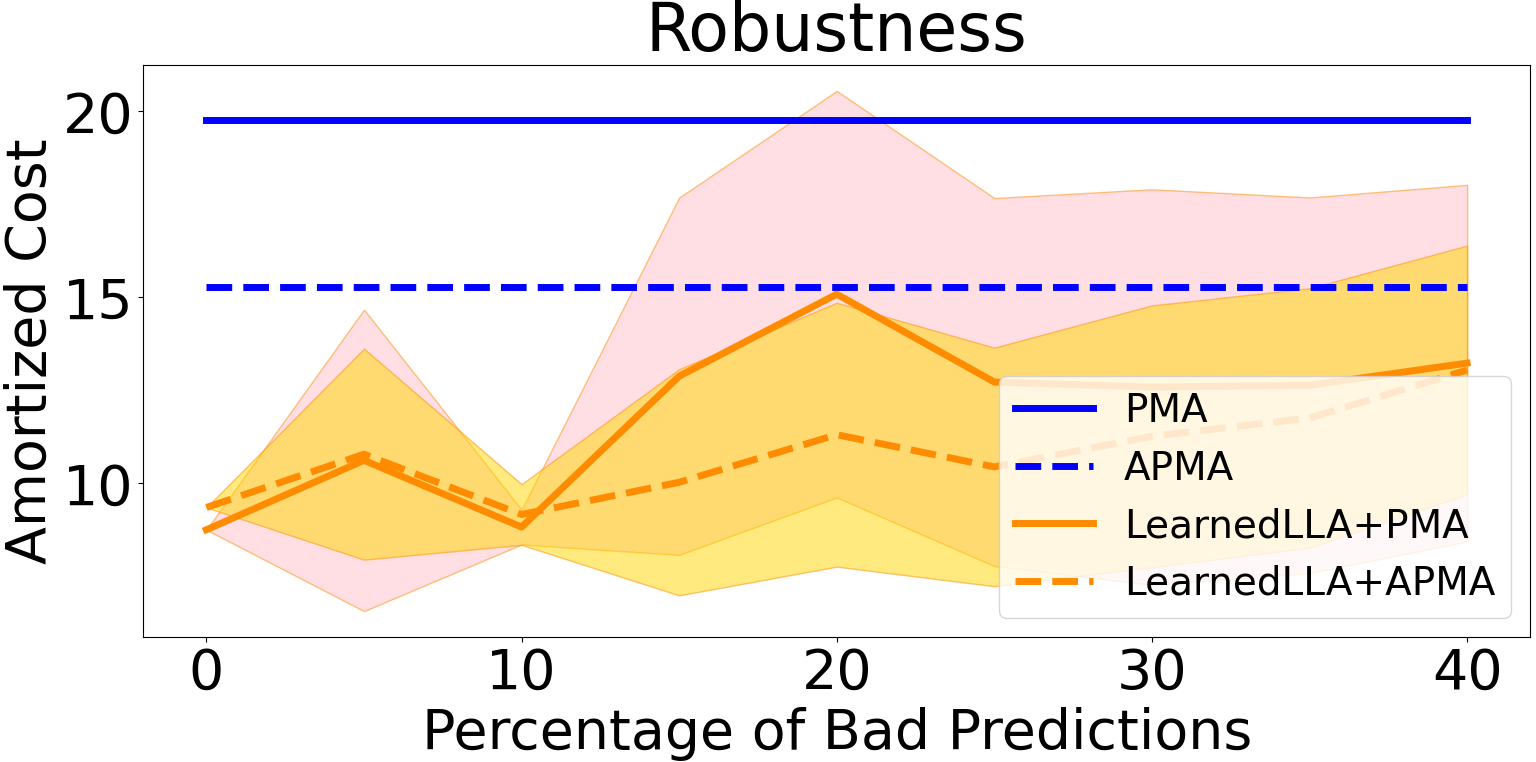}
  \caption{}
  \label{fig:MOOC(c)}
\end{subfigure}
\caption{MOOC}
\label{fig:MOOC}
\end{figure}

\begin{figure}[ht]
\ifabstract
\vspace{-.05in}
\fi
\newcommand\width{4.4cm}
\newcommand\height{2.5cm}
\newcommand\yaxissub{-10pt}
\centering
\begin{subfigure}{0.33\textwidth}
  \centering
  \hspace{\yaxissub}
  \includegraphics[width=\width, height=\height]{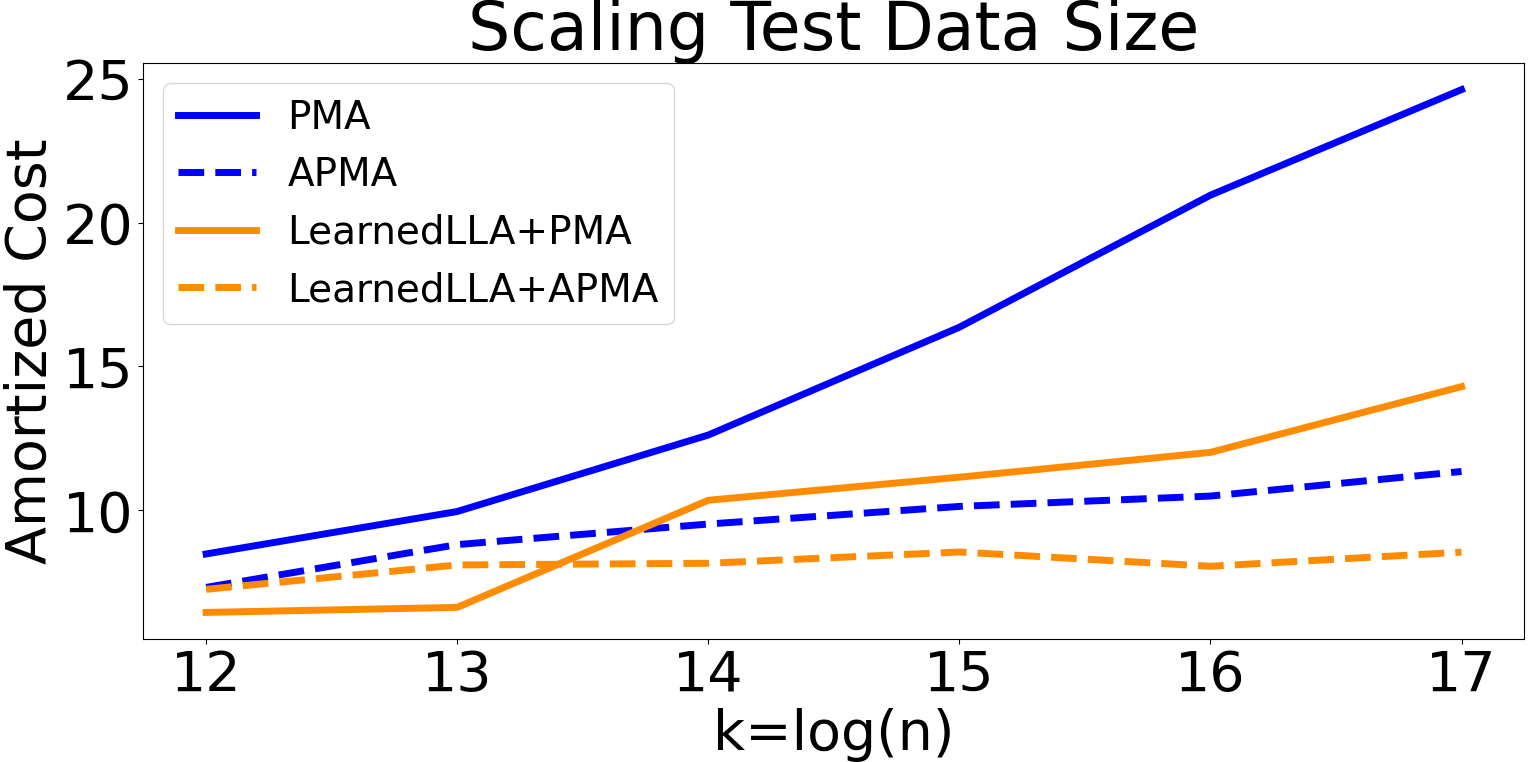}
  \caption{}
  \label{fig:AskUbuntu(a)}
\end{subfigure}
\begin{subfigure}{0.33\textwidth}
  \centering
  \hspace{\yaxissub}
  \includegraphics[width=\width, height=\height]{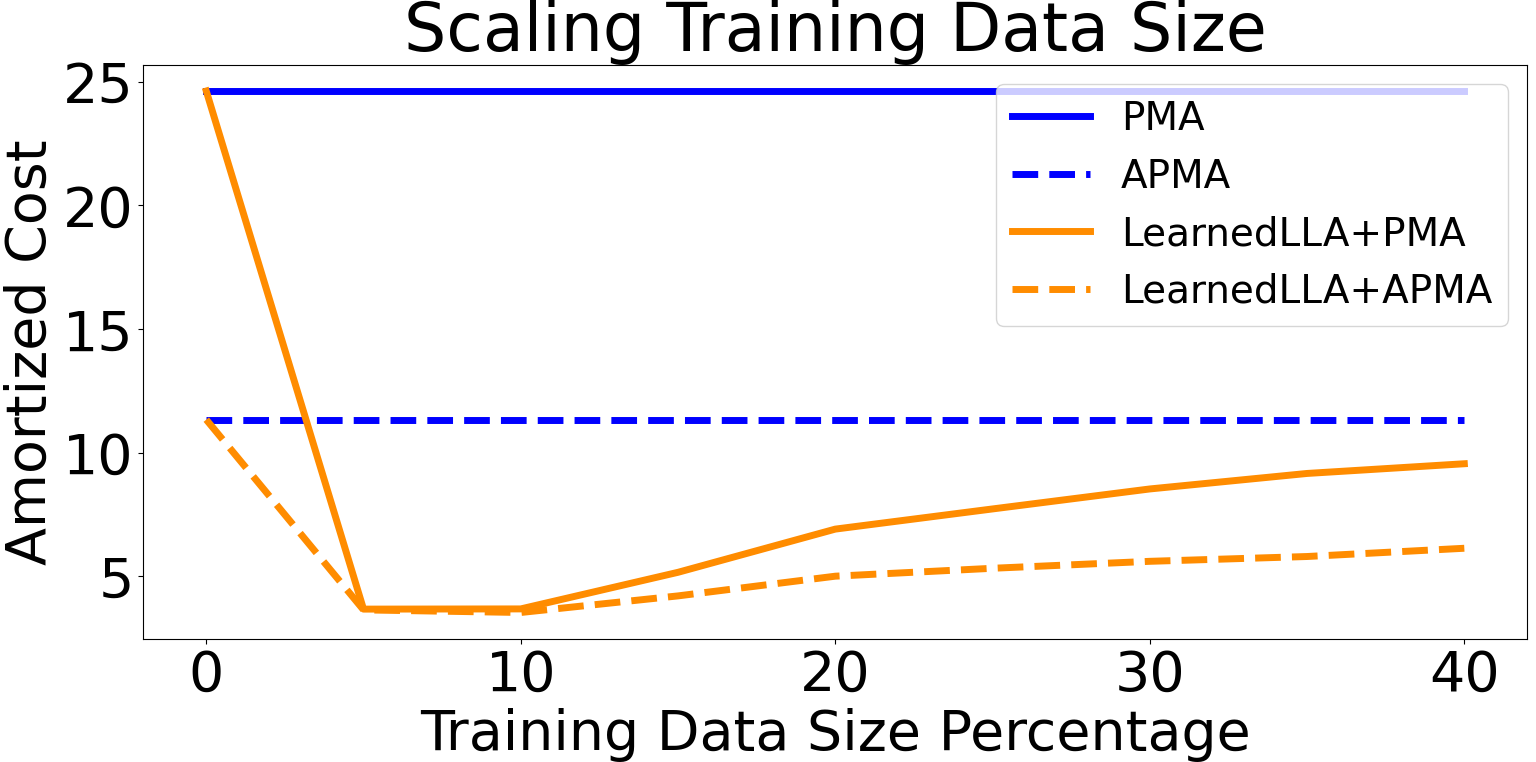}
  \caption{}
  \label{fig:AskUbuntu(b)}
\end{subfigure}
\begin{subfigure}{0.3\textwidth}
  \centering
  \hspace{\yaxissub}
  \includegraphics[width=\width, height=\height]{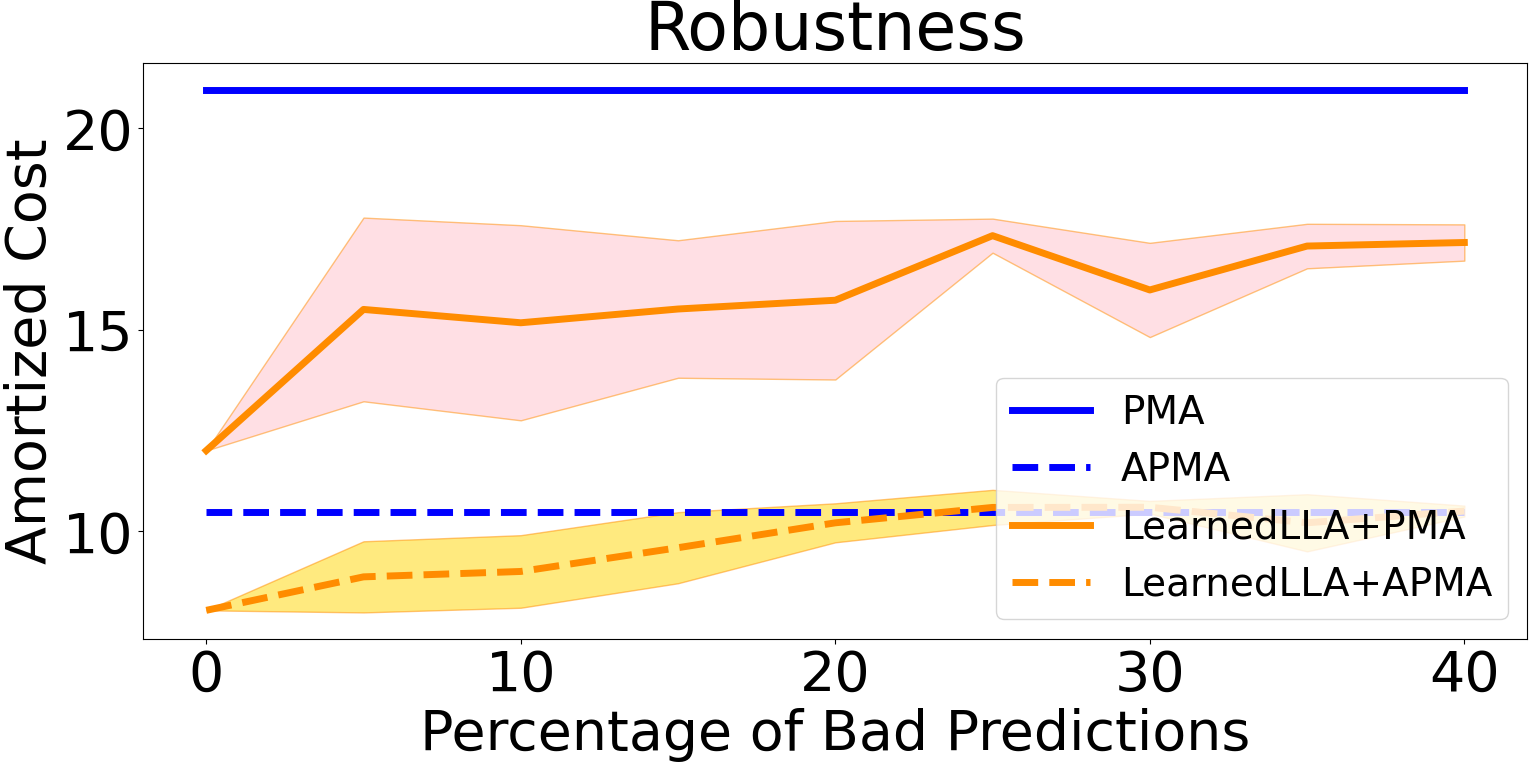}
  \caption{}
  \label{fig:AskUbuntu(c)}
\end{subfigure}
\caption{AskUbuntu}
\label{fig:AskUbuntu}
\end{figure}

\begin{figure}[ht]
\ifabstract
\vspace{-.05in}
\fi
\newcommand\width{4.4cm}
\newcommand\height{2.5cm}
\newcommand\yaxissub{-10pt}
\centering
\begin{subfigure}{0.33\textwidth}
  \centering
  \hspace{\yaxissub}
  \includegraphics[width=\width, height=\height]{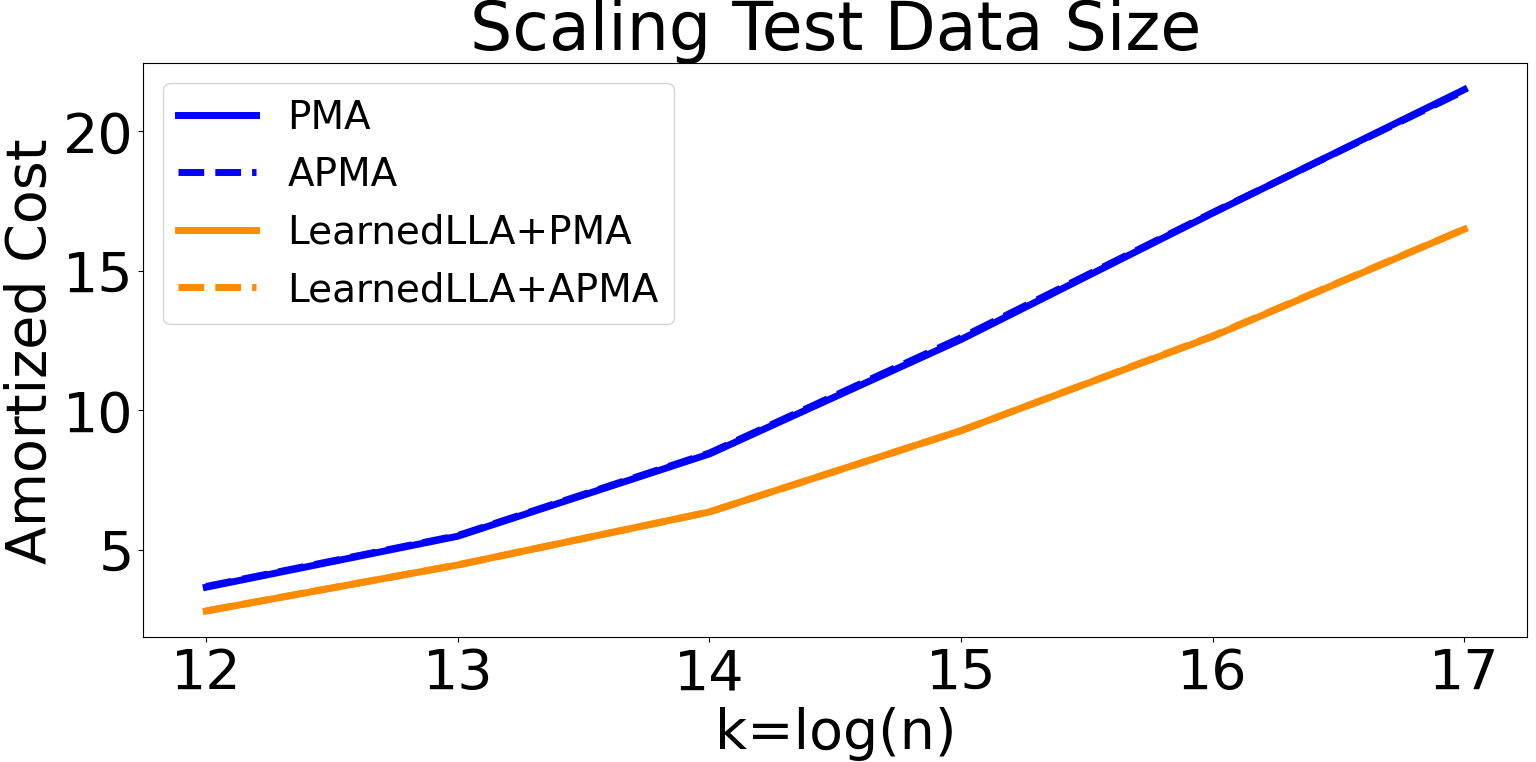}
  \caption{}
  \label{fig:email-Eu-core(a)}
\end{subfigure}
\begin{subfigure}{0.33\textwidth}
  \centering
  \hspace{\yaxissub}
  \includegraphics[width=\width, height=\height]{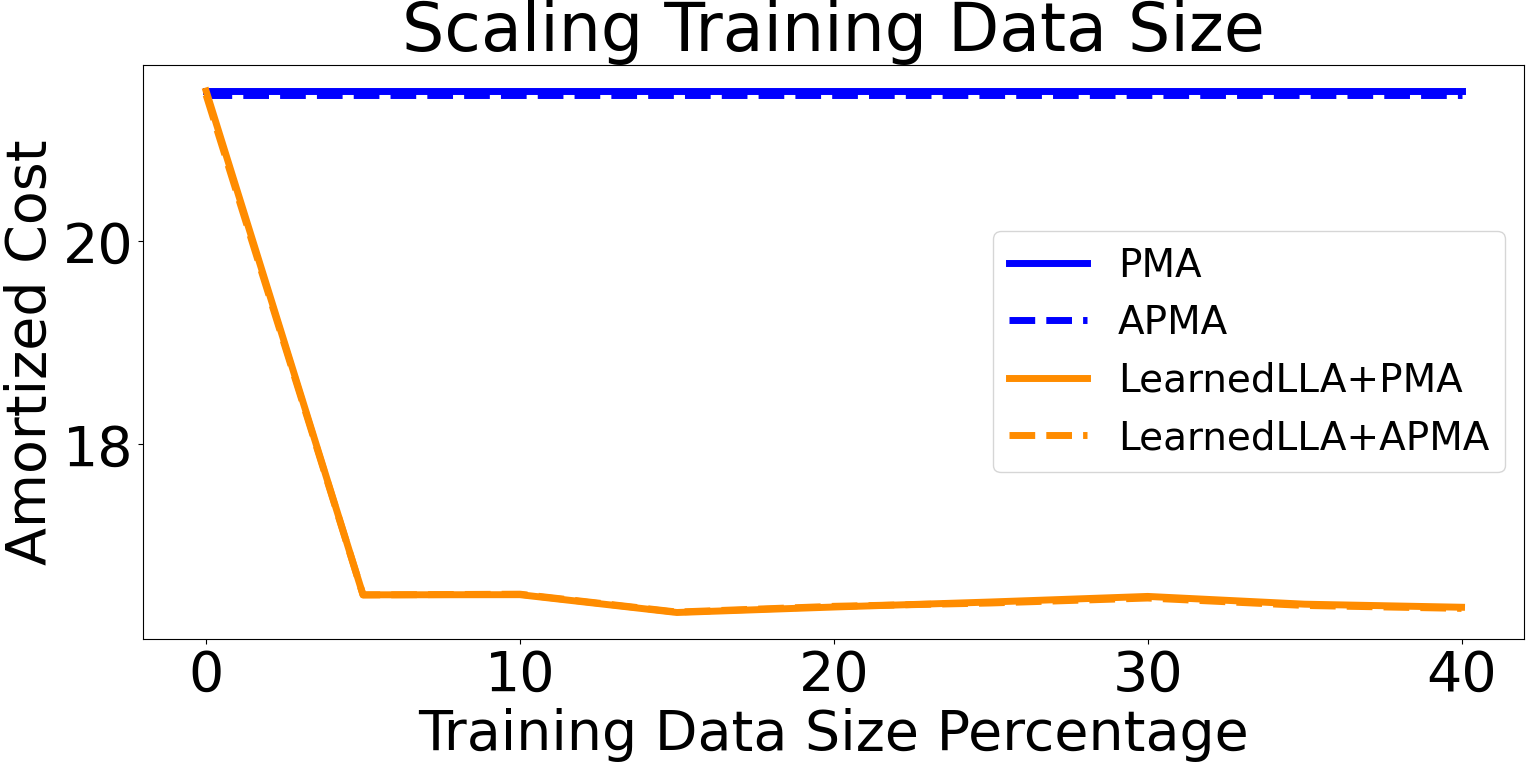}
  \caption{}
  \label{fig:email-Eu-core(b)}
\end{subfigure}
\begin{subfigure}{0.3\textwidth}
  \centering
  \hspace{\yaxissub}
  \includegraphics[width=\width, height=\height]{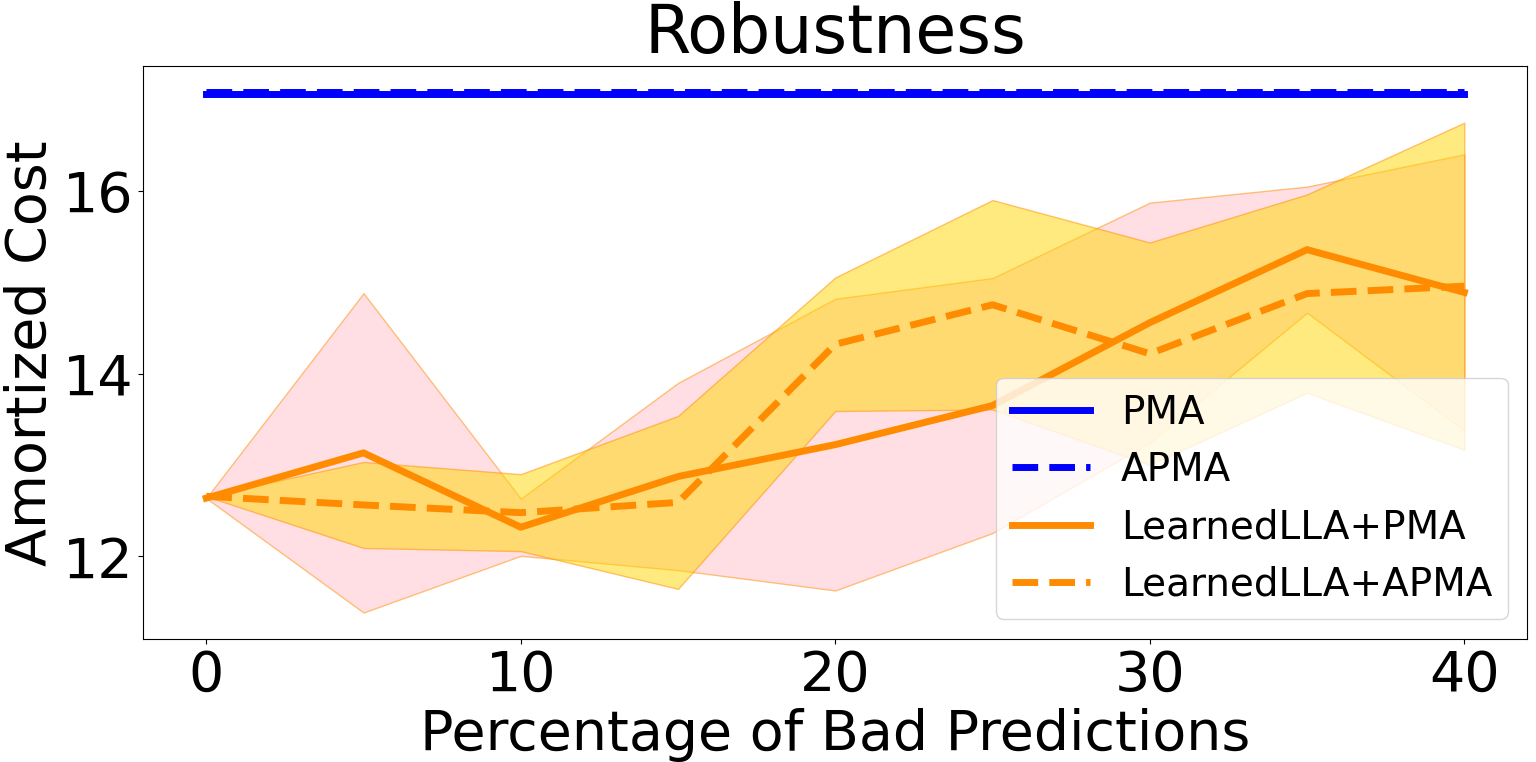}
  \caption{}
  \label{fig:email-Eu-core(c)}
\end{subfigure}
\caption{email-Eu-core}
\label{fig:email-Eu-core}
\end{figure}

\paragraph{Dataset description.} 
Here we describe the real temporal datasets we use in our experiments. In all cases, we use a prefix of the dataset in temporal order as the input sequence.
\begin{itemize}
    \item Gowalla\footnote{https://snap.stanford.edu/data/loc-Gowalla.html}~\cite{cho2011friendship}: Gowalla is a location-based social networking website where users share their locations by checking in. We use the location ID and latitude of the users that check in.
    \item MOOC\footnote{https://snap.stanford.edu/data/act-mooc.html}~\cite{kumar2019predicting}: The MOOC user action dataset represents the actions taken by users on a popular MOOC platform. The actions are represented as a directed, temporal network. The nodes represent users and course activities (targets), and edges represent the actions by users on the targets. We use the user IDs as our input sequence.
    \item AskUbuntu\footnote{https://snap.stanford.edu/data/sx-askubuntu.html}: This is a temporal network of interactions on the stack exchange web site Ask Ubuntu. There are three different types of interactions represented by a directed edge $(u, v, t)$: i. user $u$ answered user $v$'s question at time $t$,
ii. user $u$ commented on user $v$'s question at time $t$, and 
 iii. user $u$ commented on user $v$'s answer at time $t$. We use the IDs of target users in the answers-to-questions network as the input sequence.
    \item email-Eu-core\footnote{https://snap.stanford.edu/data/email-Eu-core-temporal.html}~\cite{paranjape2017motifs} The network was generated using email data from a large European research institution. The e-mails only represent communication between institution members (the core), and the dataset does not contain incoming messages from or outgoing messages to the rest of the world. A directed edge $(u, v, t)$ means that person $u$ sent an e-mail to person $v$ at time $t$. A separate edge is created for each recipient of the e-mail. We use the IDs of target users as our input sequence. 
\end{itemize}

\paragraph{Results.} In~\Cref{fig:GowallaLocationID,fig:MOOC,fig:AskUbuntu,fig:email-Eu-core}, we show plots for other datasets in Table~\ref{table:exp}. The setup is exactly similar to Figure~\ref{fig:GowallaLatitude}.

\paragraph{Discussion.} These results further support our conclusions. Note that in some cases, increasing the size of the training set results in slightly worse performance for LearnedLLA. We believe this is because as we increase the size of the training data, we use older data as training.

\fi

\end{document}